\newcommand{\oomit}[1]{}
\newcommand{\zap}[1]{}
\newtheorem{rrule}{Rule}
\newcommand{\ifrule}[2]{\begin{array}{cc}#1\\\hline#2\end{array}}
\newcommand{\tsc}{scheduling constraint\xspace}
\newcommand{\rr}   {\mathtt{rr}\xspace}
\newcommand{\ww}   {\mathtt{ww}\xspace}
\newcommand{\vars} {\mathbb{V}\xspace}
\newcommand{\events}{\mathbb{E}\xspace}
\newcommand{\ceevents}{\mathbb{E}_{\ce}\xspace}
\newcommand{\stms}{\mathbb{T}\xspace}
\newcommand{\conflicts}{\mathbb{A}\xspace}
\newcommand{\krset}{\mathbb{O}\xspace}
\newcommand{\ppprec}{\prec_{P}^0\xspace}
\newcommand{\rfprecc}{{\lhd}\xspace}
\newcommand{\cepoprec}{{\prec_{\ce}^0}\xspace}
\newcommand{\cerfprec}{{\lhd_{\ce}}\xspace}
\newcommand{\ceprec}{\prec_\ce\xspace}
\newcommand{\ce}{\pi\xspace}
\newcommand{\order}{\lambda\xspace}
\newcommand{\execution}{\tau\xspace}
\newcommand{\stm}{t\xspace}
\newcommand{\kr}{o\xspace}
\newcommand{\stmo}{\Lambda\xspace}
\newcommand{\monolithic}{\alpha\xspace}
\newcommand{\krf}{\varpi\xspace}
\newcommand{\true}{TRUE\xspace}
\newcommand{\sat}{SAT\xspace}
\newcommand{\smt}{SMT\xspace}
\newcommand{\eog}{EOG\xspace}
\newcommand{\cegar}{CEGAR\xspace}
\newcommand{\pth}{PThreads\xspace}
\newcommand{\svcomp}{SV-COMP\xspace}
\newcommand{\svcompc}{SV-COMP 2017\xspace}
\newcommand{\minisat}{\textsc{MiniSat2}\xspace}
\newcommand{\gcbmc}{\textsc{Yogar-CBMC}\xspace}
\newcommand{\cbmc}{\textsc{CBMC}\xspace}
\newcommand{\threader}{\textsc{Threader}\xspace}
\newcommand{\lazycseq}{\textsc{Lazy-CSeq-Abs}\xspace}
\newcommand{\mucseq}{\textsc{MU-CSeq}\xspace}
\DeclareMathOperator{\var}{var}
\DeclareMathOperator{\type}{type}
\DeclareMathOperator{\guard}{guard}
\DeclareMathOperator{\clk}{clk}
\DeclareMathOperator{\sel}{sel}
\begin{document}

\title{Scheduling Constraint Based Abstraction Refinement for Multi-Threaded Program Verification}

\author{Liangze Yin}
\affiliation{School of Computer, National University of Defense Technology, China}
\email{yinliangze@163.com}
\author{Wei Dong}
\affiliation{School of Computer, National University of Defense Technology, China}
\email{wdong@nudt.edu.cn}
\author{Wanwei Liu}
\affiliation{School of Computer, National University of Defense Technology, China}
\email{wwliu@nudt.edu.cn}
\author{Ji Wang}
\affiliation{School of Computer, National University of Defense Technology, China}
\email{wj@nudt.edu.cn}

\thanks{This work has been submitted to the IEEE for possible publication. Copyright may be transferred without notice,
after which this version may no longer be accessible.}

\begin{abstract}
Bounded model checking is among the most efficient techniques for the automatic verification of concurrent programs.
However, encoding all possible interleavings often requires a huge and complex formula, which significantly limits the salability.
This paper proposes a novel and efficient abstraction refinement method for multi-threaded program verification.
Observing that the huge formula is usually dominated by the exact encoding of the scheduling constraint, this paper proposes a \tsc based abstraction refinement method, which avoids the huge and complex encoding of BMC.
In addition, to obtain an effective refinement, we have devised two graph-based algorithms over event order graph for counterexample validation and refinement generation, which can always obtain a small yet effective refinement constraint.
Enhanced by two constraint-based algorithms for counterexample validation and refinement generation, we have proved that our method is sound and complete w.r.t. the given loop unwinding depth.
Experimental results on \svcompc benchmarks indicate that our method is promising and significantly outperforms the existing state-of-the-art tools.
\end{abstract}

\begin{CCSXML}
<ccs2012>
<concept>
<concept_id>10011007.10011074.10011099</concept_id>
<concept_desc>Software and its engineering~Software verification and validation</concept_desc>
<concept_significance>500</concept_significance>
</concept>
</ccs2012>
\end{CCSXML}

\ccsdesc[500]{Software and its engineering~Software verification and validation}

\keywords{Multi-Threaded Program, Bounded Model Checking, Scheduling Constraint, Event Order Graph}

\maketitle

\section{Introduction}
Facilitated by the popularization of multi-core architectures, concurrent programs are becoming popular to take full advantage of the available computing resources.
However, due to the nondeterministic thread interleavings, traditional approaches such as testing and simulation are hard to guarantee the correctness of such programs.
Automatic program verification has become an important complementary to traditional approaches.
Given that most of the errors can be detected with small loop unwinding depths, bounded model checking (BMC) has been proven to be one of the most efficient techniques for the automatic verification of concurrent programs \cite{QadeerR05, Beyer17}.
However, due to the complex inter-thread communication, a huge encoding is usually required to offer an exact description of the concurrent behavior, which greatly limits the scalability of BMC for concurrent programs.

This paper focuses on multi-threaded programs based on shared variables and \emph{sequential consistency (SC)} \cite{AdveG96}.
For these programs, we have observed that the \emph{\tsc}, which defines that ``for any pair $\langle w, r \rangle$ s.t. $r$ reads the value of a variable $v$ written by $w$, there should be no other write of $v$ between them,'' significantly contributes to the complexity of the behavior encoding.
In the existing work of BMC, to encode the \tsc, each access of a shared variable is associated with a ``clock variable''. The scheduling constraint is then encoded into a complicated logic formula over the state and clock variables, the size of which is cubic in the number of shared memory accesses \cite{AlglaveKT13}.

Inspired by this observation, this paper proposes a novel method for multi-threaded program verification which performs abstraction refinement by weakening and strengthening the \tsc.
It initially ignores the \tsc and then obtains an over-approximation abstraction of the original program (w.r.t. the given loop unwinding depth).
If the property is safe on the abstraction, then it also holds on the original program.
Otherwise, a counterexample is obtained and the abstraction is refined if the counterexample is infeasible.
N. Sinha and C. Wang also performed abstraction refinement to deal with the overhead of an exact encoding of the concurrent behavior \cite{SinhaW11}.
However, their abstraction model was performed by restricting the sets of read events and read-write links, while we consider all read events and read-write links but ignore the \tsc.

The efficiency of our method depends on the number of iterations required to verify the property and the sizes of the constraints added during the refinement process.
Another innovation of this paper is that, to verify the property with a small number of small problems, we have devised two graph-based algorithms over event order graph (\eog) for counterexample validation and refinement generation, s.t. an effective refinement constraint can be obtained in each refinement iteration.
Whenever an abstraction counterexample is determined to be infeasible, we can always obtain a set of ``core kernel reasons'' of the infeasibility, which can usually be encoded into simple constraints and reduce a large amount of space.
In our experiments, most of the programs can be verified within dozens of refinement iterations.
Meanwhile, the increased size of the abstraction during the refinement process can usually be ignored compared with that of the initial abstraction.

Our graph-based \eog validation method is effective in practice.
Given an infeasible \eog, it can usually identify the infeasibility with rare exceptions.
If it is not sure whether an \eog is feasible or not, we explore a constraint-based \eog validation process to further validate its feasibility.
If an infeasibility is returned, we explore a constraint-based refinement generation process to refine the abstraction.
Enhanced by these two constraint-based processes, we have proved that our method is sound and complete w.r.t the given loop unwinding depth.

We have implemented the proposed method on top of \cbmc and applied it to the benchmarks in the concurrency track of \svcompc \cite{sv-comp2017}. The experimental results demonstrate that our method drastically improves the verification performance.
Without the scheduling constraint, the formula size reduces to 1/8 on average, whereas the number of CNF clauses increased during the refinement process can usually be ignored compared with that of the abstraction.
Moreover, our tool has successfully verified all these examples within 1550 seconds and 43 GB of memory.
By contrast, \lazycseq | a leading tool for concurrent program verification | spended 9820 seconds and 104 GB memory to achieve the same score.
Our tool has won the gold medal in the concurrency track of \svcompc \cite{sv-comp2017} (Warning: It will violate our anonymity).

The contributions of this paper are listed as follows.

\begin{enumerate}
 \item This paper presents a \tsc based abstraction refinement method for multi-threaded program verification, which avoids the huge and complex constraint to encode the concurrent behavior.
 \item This paper presents two graph-based algorithms over event order graph for counterexample validation and refinement generation, which can always obtain a small yet effective refinement constraint in practice.
 \item To ensure the soundness, we have enhanced our method by two constraint-based algorithms for counterexample validation and refinement generation. In this manner, a both efficient and sound method for multi-threaded program verification is obtained.
 \item We have implemented our method on top of \cbmc. The evaluation on the \svcompc benchmarks indicates that our method is promising and significantly outperforms the existing state-of-the-art tools.
\end{enumerate}

The rest of this paper is organized as follows.
Section~\ref{sec:preliminaries} introduces the preliminaries.
Section~\ref{sec:illustration} outlines and illustrates our proposed method by presenting a running example.
Sections~\ref{sec:validation} and~\ref{sec:refinement} present our \eog-based counterexample validation and refinement generation algorithms, respectively.
Section~\ref{sec:soundness} discusses the soundness and efficiency of our method.
Section~\ref{sec:experiment} provides the experimental results.
Section~\ref{sec:relatedWork} reviews the related work, and Section~\ref{sec:conclusion} concludes the paper.

\section{Preliminaries}
\label{sec:preliminaries}

\subsection{Multi-Threaded Program}
A \emph{multi-threaded} program $P$ consists of $N \geq 1$ concurrent
threads $P_i$ ($1 \leq i \le N$).
It contains a set of variables which are partitioned into \emph{shared variables} and \emph{local variables}.
Each thread $P_i$ can read/write both the shared variables and its local variables.
We focus on programs based on \pth, one of the most popular libraries for multi-threaded programming.
It uses \texttt{pthread\_create(\&\texttt{t}, \&attrib, \&f, \&args)} to create a new thread \texttt{t}, and \texttt{pthread\_join(\texttt{t}, \&\_return)} to suspend the current thread until thread \texttt{t} terminates~\footnote{More information about \pth can be found at https://computing.llnl.gov/tutorials/pthreads/}.

In this paper, we assume each variable access is atomic.
We also assume that each statement $\stm$ of a multi-threaded program is either 1) a \emph{global statement} that contains only one shared variable access (it may further contain multiple local variable accesses) or 2) a \emph{local statement} that only operates on local variables.
A statement with multiple shared variable accesses can always be translated to a sequence of global statements.
By defining the expressions suitably and using source-to-source transformations, we can model all statements using global and local statements.

We also assume that all functions are inlined and all loops are unwound by a limited depth (it is a basic proviso in BMC).
We also omit the discussion on modeling the sophisticated C language data elements, such as pointers, structures, arrays, and heaps, etc., because they are irrelative with the concurrency and we deal with them in the same way as CBMC does.
The discussion on PThread primitives, such as \texttt{pthread\_mutex\_lock} and \texttt{pthread\_mutex\_unlock}, etc., are also omitted in this paper.
In CBMC, they are implemented by shared variables and we deal with them in the same way as CBMC does.

Given a multi-threaded program, we write $\vars$ for the set of shared variables.
An \emph{event} $e$ is a read/write access to a shared variable.
We use $\events$ to denote all of them.
Each global statement corresponds to an event, i.e., the event contained in the global statement.
Each $e\in\events$ is associated with an element $\var(e)\in\vars$, a type $\type(e)$, and a literal $\guard(e)$,
which represent the accessed variable, the type of access, and the guard condition literal, respectively.
$\type(e)$ can be either $\rr$ (i.e., ``read'') or $\ww$ (i.e., ``write'').
Any event $e$ with $\var(e) = v$ and $\type(e)=\rr$ (resp. $\type(e)=\ww$) is called a read (resp. a write) of $v$.
To express the execution orders of different events, we also associate each event with an unique natural number $\clk(e)$.
$\clk(e_1) < \clk(e_2)$ represents that $e_1$ executes before $e_2$.

The program $P$ determines a partial order $\ppprec\subset\events\times\events$.
Intuitively, $e_1\ppprec e_2$ (or we write $(e_1, e_2) \in \ppprec$) indicates that ``$e_1$ should happen before
$e_2$ according to the \emph{program order} of $P$''.
According to the program order, $e_1\ppprec e_2$ holds in the following three cases.
\begin{itemize}
\item For any two events $e_1$ and $e_2$ of the same thread, if $e_1$
  must happen before $e_2$ according to the sequential
  semantics, then we have $e_1\ppprec e_2$.
\item If \texttt{pthread\_create()} is used to create a new thread
  \texttt{t} at some point $p$ of the current thread, then for any event
  $e_1$ of the current thread before $p$ and $e_2$ of the thread \texttt{t},
  we have $e_1\ppprec e_2$.
\item If \texttt{pthread\_join(\texttt{t})} is used to suspend the current
  thread at some point $p$, then for any event $e_1$ of the thread
  \texttt{t} and $e_2$ of the current thread after $p$, we have
  $e_1\ppprec e_2$.
\end{itemize}

A read-write link $(e_1, e_2)$ represents that ``$e_2$ reads the value written by $e_1$''.
Therefore, $\type(e_1)=\ww$, $\type(e_2)=\rr$, $\var(e_1)=\var(e_2)$, and the value of $e_2$ is equal to that of $e_1$.
In addition, there should be no other ``write'' of $\var(e_1)$ happening between them.
Given a read-write link $\order := (e_1, e_2)$, we denote by $\sel(\order)$ the \emph{read-write link literal} (a boolean variable) that represents the link.

\subsection{Bounded Model Checking}
Bounded model checking \cite{BiereCCZ99} is one of the most applicable techniques to
alleviate the state space explosion problem in concurrent program verification.
Given that most of the errors can be detected with small loop unwinding depths, the unwinding depth for those loops and recursions is limited \cite{QadeerR05}.
Instead of explicitly enumerating all thread interleavings,
BMC employs a symbolic representation to encode the verification problem, which is then solved by a \sat/\smt solver.
If a positive answer is given, then a satisfying assignment
corresponding to a feasible counterexample is acquired. Otherwise, the
program is proven safe w.r.t. the given loop unwinding depth.

In BMC, the monolithic encoding of a multi-threaded program is usually represented as $\monolithic := \phi_{init} \wedge \rho \wedge \zeta \wedge \xi$, where $\phi_{init}$ is the initial states, $\rho$ encodes each thread in isolation, $\zeta$ formulates that ``each read of a variable $v$ may read the result of any write of $v$'', and $\xi$ formulates the \tsc, which defines that ``for any pair $\langle w, r \rangle$ s.t. $r$ reads the value of a variable $v$ written by $w$, there should be no other write of $v$ between them'' \cite{AlglaveKT13}.

\section{Method Overview and Illustration}
\label{sec:illustration}

\subsection{Method Overview}
The performance of BMC is usually decided by that of the constraint solving, the performance of which depends significantly on the size of the constraint problem.
Hence, an important way to improve the performance of BMC is to reduce the size of the constraint problem.
In BMC of multi-threaded programs, we have observed that the monolithic encoding $\monolithic$ is usually dominated by the \tsc $\xi$.
To reduce the constraint problem size, we propose to ignore the \tsc in the constraint solving process.
An abstraction of the monolithic encoding is then obtained, which is defined as follows.

\begin{definition}
\label{def:abstraction}
Given a multi-threaded program, the abstraction ignoring the \tsc can be formulated as $\varphi_0 := \phi_{init} \wedge \rho \wedge \zeta$, where $\phi_{init}$ is the initial states, $\rho$ encodes each thread in isolation, and $\zeta$ formulates that ``each read of a variable $v$ may read the result of any write of $v$''.
\end{definition}

The \tsc $\xi$ defines a set of order requirements among the events.
All of them should be satisfied for any concrete execution of a multi-threaded program.
Hence, whenever a counterexample $\ce$ of the abstraction is obtained, further validation is required to determine whether $\ce$ satisfies all the order requirements defined in $\xi$.
If that is true, $\ce$ is feasible.
Otherwise, $\ce$ is infeasible.
In other words, it may not correspond to a concrete execution.
In this case, we continually search the rest of the abstraction space for another new counterexample, until a feasible counterexample is found or all counterexamples of the abstraction have been determined to be infeasible.

Fig.~\ref{fig:x} presents an overview of our method.
Given a multi-threaded C program, we first add the abstraction $\varphi_0$ and the error states $\phi_{err}$ to the abstraction model.
If it is unsatisfiable, then the property is proven safe w.r.t. the given loop unwinding depth.
Otherwise, a counterexample of the abstraction is provided.
Given that the scheduling constraint is ignored in the abstraction, this counterexample may be infeasible and further validation is required.
In our method, the feasibility of an abstraction counterexample is determined via validating the feasibility of its corresponding event order graph (\eog).
An intuitive method for \eog validation is constraint solving.
If the \eog is infeasible, then the abstraction is refined by exploring the unsatisfiable core.
However, this method is not effective for refinement generation (cf. Section~\ref{sec:refinement}).
To obtain an effective refinement, we have devised two graph-based algorithms over \eog for \eog validation and refinement generation, in which a small yet effective refinement can always be obtained if the \eog is determined to be infeasible.
However, this method is not complete. It can only give an infeasible answer (cf. Section~\ref{sec:validation}).
To make our method both efficient and sound, we first adopt the graph-based \eog validation method.
If the \eog is determined to be infeasible, the graph-based refinement process is performed to obtain an effective refinement constraint.
Otherwise, we employ the constraint-based validation process to further validate the \eog.
Our experiments demonstrate that our graph-based \eog validation method is effective in practice.
It can always identify the infeasibility of an infeasible \eog with rare exceptions.
Actually, the constraint-based refinement generation process (the dashed part of Fig.~\ref{fig:x}) has never been invoked on \svcompc benchmarks.

\begin{figure}
\centering
\includegraphics[width=11cm]{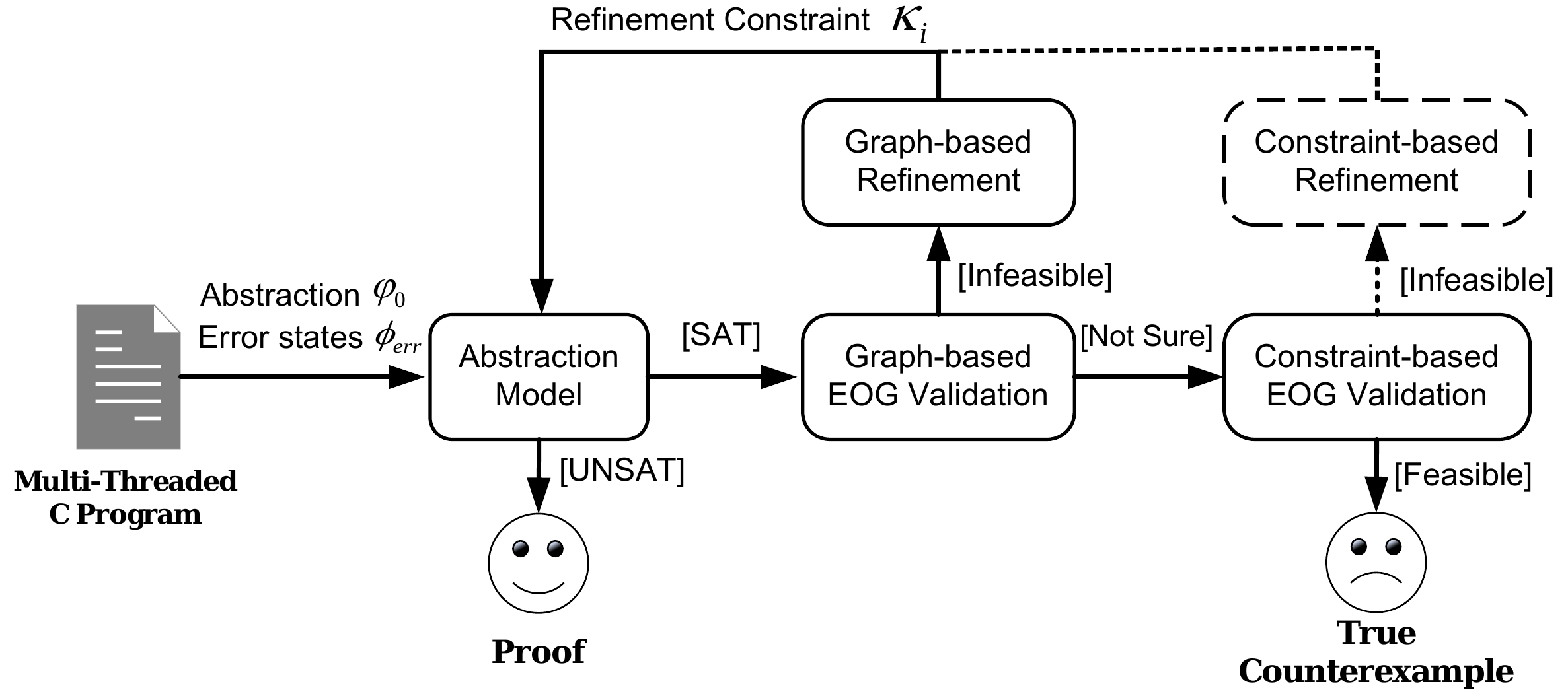}
\caption{An overview of our method}
\label{fig:x}
\end{figure}

\subsection{Method Illustration}
We provide a running example to illustrate our method.  The program involves three threads, namely, \texttt{main},
\texttt{thr1}, and \texttt{thr2}, as shown in Fig.~\ref{fig:2}(a).
In this example, the set of shared variables $\vars := \{x, y, m, n\}$, which are initialized to $\{1, 1, 0, 0\}$, respectively.  The \texttt{main} thread creates threads \texttt{thr1} and \texttt{thr2}, and
then waits until these two threads terminate.
We attempt to verify that it is impossible for both $m$ and
$n$ to be $1$ after the exit of \texttt{thr1} and \texttt{thr2}.
This program offers a modular proof to this property.

\begin{figure}
  \centering
  \includegraphics[width=9cm]{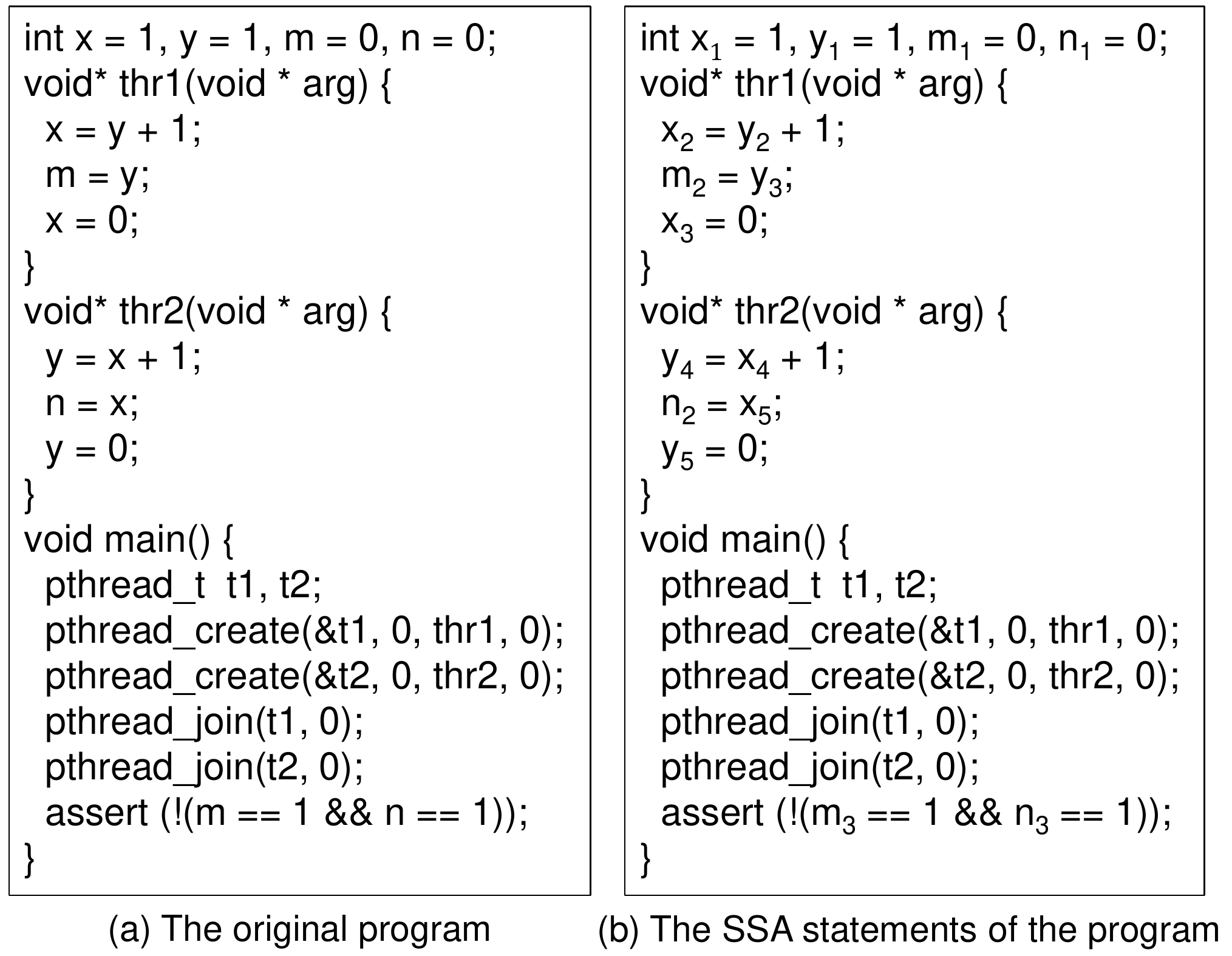}
  \caption{A three-thread program}
  \label{fig:2}
\end{figure}

\paragraph{Initial abstraction.}
We use $\phi_{init}$ and $\phi_{err}$ to denote the initial and
error states, respectively, while $\rho_{main}$, $\rho_{thr1}$, and $\rho_{thr2}$ denote the transition relationships of these three threads, respectively, and $\rho$ is the conjunction of those transition relationships of all threads.

To encode the program, as shown in Fig.~\ref{fig:2}(b), we convert the original program into a set of {\em{static single assignment}} (SSA) statements, in which the program variables are renamed s.t. each variable is assigned only once.
Particularly, each ``read'' of any shared variable also has a unique name.
Then $\phi_{init}$, $\phi_{err}$, and $\rho$ are defined as follows.
Note that the transition relationship of each thread (such as $\rho_{main}$, $\rho_{thr1}$, and $\rho_{thr2}$) encodes that thread in isolation, i.e., it doesn't consider the thread communications.

\begin{equation*}
  \begin{array}{ll}
    \phi_{init} &:= (x_1 = 1) \wedge (y_1 = 1) \wedge (m_1 = 0) \wedge (n_1 = 0) \\
    \phi_{err} &:= (m_3 = 1) \wedge (n_3 = 1) \\
    \rho_{thr1} &:= (x_2 = y_2 + 1) \wedge (m_2 = y_3) \wedge (x_3 = 0) \\
    \rho_{thr2} &:= (y_4 = x_4 + 1) \wedge (n_2 = x_5) \wedge (y_5 = 0) \\
    \rho_{main} &:= true \\
    \rho &:= \rho_{thr1} \wedge \rho_{thr2} \wedge \rho_{main}
  \end{array}
\end{equation*}

To encode $\zeta$, we must identify the behavior of every ``read'' event.
Consider the shared variable $x$ for example.
There are five read/write accesses to the variable $x$.
For each access, as shown in Fig.~\ref{fig:2}(b), we rename $x$ to a unique name in the SSA statements, i.e., $x_1$, $x_2$, $\cdots$, $x_5$.
We denote by $e_{x_i}$ ($1 \leq i \leq 5$) the event corresponding to $x_i$.
Then $\{e_{x_1}, e_{x_2}, e_{x_3}\}$ and $\{e_{x_4}, e_{x_5}\}$ are the sets of
``writes'' and ``reads'' of $x$, respectively.
We use a read-write link literal $s_{v,i,j}$ to indicate that $e_{v_j}$ reads the value written by $e_{v_i}$ ($v\in\vars$).
The encoding $\psi_{x_i}$ ($i = 4, 5$), defined below, indicates that the value of $x_i$ can take any value of $x_1$, $x_2$, and $x_3$.
Given that the variable $x_i$ can not take several different values simultaneously, the formula $s_{x,4,1} \vee s_{x,4,2} \vee s_{x,4,3}$ represents that, among these three literals, there is one and only one true literal.
We denote by $\zeta_x$ the conjunction of $\psi_{x_4}$ and $\psi_{x_5}$.
It formulates the possible behaviors of all ``reads'' of $x$.

\begin{equation*}
  \begin{array}{ll}
    \psi_{x_4} :=
    & (s_{x,4,1} \Rightarrow (x_4 = x_1)) \wedge \\
    & (s_{x,4,2} \Rightarrow (x_4 = x_2)) \wedge \\
    & (s_{x,4,3} \Rightarrow (x_4 = x_3)) \wedge \\
    &(s_{x,4,1} \vee s_{x,4,2} \vee s_{x,4,3}) \\

    \mathcal{\psi}_{x_5} :=
    & (s_{x,5,1} \Rightarrow (x_5 = x_1)) \wedge \\
    & (s_{x,5,2} \Rightarrow (x_5 = x_2)) \wedge \\
    & (s_{x,5,3} \Rightarrow (x_5 = x_3)) \wedge \\
    &(s_{x,5,1} \vee s_{x,5,2} \vee s_{x,5,3}) \\

    \zeta_{x} :=&\psi_{x_4} \wedge \psi_{x_5} \\
  \end{array}
\end{equation*}

Similarly, we obtain the corresponding formulas of $\zeta_{y}$,
$\zeta_{m}$, and $\zeta_{n}$.
Let $\zeta :=\zeta_{x} \wedge \zeta_{y} \wedge \zeta_{m} \wedge \zeta_{n}$.
The initial abstraction can then be formulated as follows.
\begin{equation}
\label{equ:abstract}
  \varphi_0 := \phi_{init}\wedge\rho\wedge\zeta
\end{equation}

\paragraph{Constraint solving of the first round.}
Using $\varphi_0\wedge\phi_{err}$ as input to a constraint solver will return SAT and yield a counterexample $\ce_0$, which is a set of assignments to the variables in $\varphi_0\wedge\phi_{err}$.

\paragraph{Counterexample validation of the first round.}

Given that the \tsc is excluded from the abstraction, such a counterexample may be infeasible.
In our method, a counterexample $\ce$ is validated via validating its corresponding \eog (cf. Section~\ref{sec:validation}), which captures all the order requirements among the events of $\ce$.
We first employ the graph-based \eog validation method to determine its feasibility.

Fig. \ref{fig:3} shows the \eog corresponding to $\ce_0$.
In figures that describe EOGs, the white and gray nodes denote ``writes'' and ``reads'' occurring in the corresponding counterexample, respectively.
A solid arrow with a triangular head from $e_1$ to $e_2$ represents a \emph{program order}, which requires that $e_1$ should happen before $e_2$.
A dashed arrow from $e_1$ to $e_2$ represents a \emph{read-write link} $(e_1, e_2)$.
It requires that 1) $e_1$ should happen before $e_2$, and 2) no ``write'' of $\var(e_1)$ can happen between them.
A solid arrow with a hollow head from $e_1$ to $e_2$ represents a \emph{derived order}, which is derived from existing order requirements. It also requires that $e_1$ should happen before $e_2$.
For brevity, in these figures, we use the subscripts of an event as labels, that is, we use $v_i$ to represent the event $e_{v_i}$.
Now the question is: Is there any total order of all these nodes that satisfies all these order requirements?
This is not a trivial problem.
However, if there exists some cycle in the graph, then the answer must be ``no'', and the counterexample is infeasible.

\begin{figure}
\centering
\includegraphics[width=6.5cm]{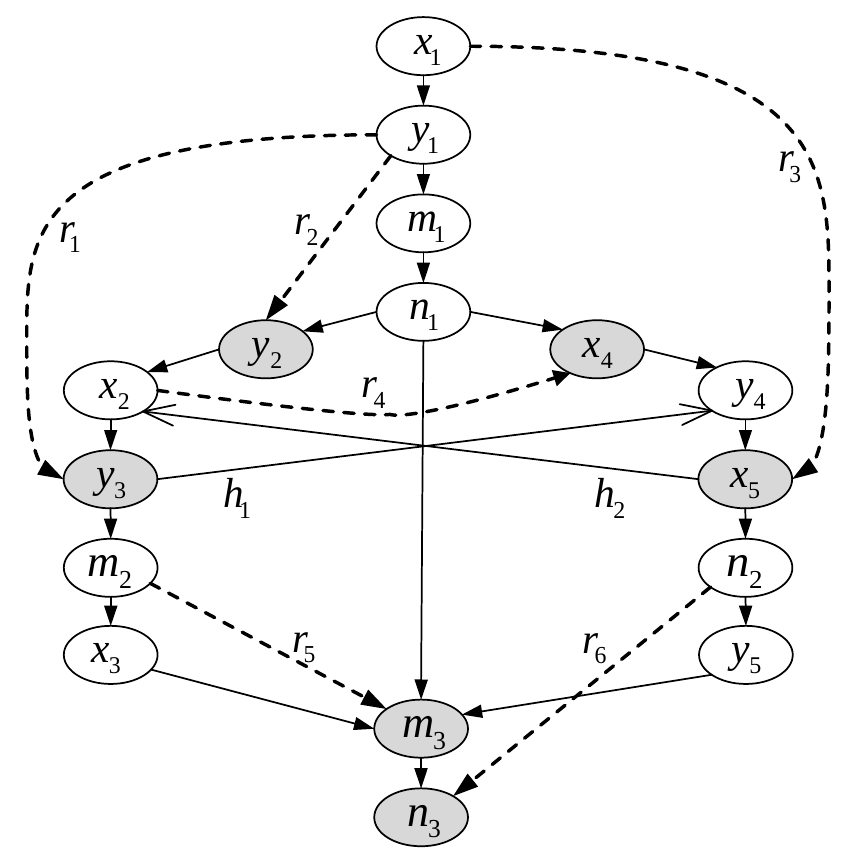}
\caption{EOG of counterexample $\ce_0$. }
\label{fig:3}
\end{figure}

Given a counterexample $\ce$ and two events $e_1$ and $e_2$, we use $e_1 \prec_{\ce} e_2$ to represent that $e_1$ should happen before $e_2$ in $\ce$.
By applying our graph-based \eog validation algorithm (cf. Section \ref{sec:validation}), we can deduce two derived orders $e_{y_3} \prec_{\ce_0} e_{y_4}$ and $e_{x_5} \prec_{\ce_0} e_{x_2}$, which are denoted by $h_1$ and $h_2$ respectively~\footnote{We can deduce more orders from the \eog, but we only list $h_1$ and $h_2$, because they will be used later.}.
Fig.~\ref{fig:3} shows two cycles, including $C_1: e_{x_2} \prec_{\ce_0} e_{y_3} \prec_{\ce_0} e_{y_4} \prec_{\ce_0} e_{x_5} \prec_{\ce_0} e_{x_2}$ and
$C_2: e_{x_2} \prec_{\ce_0} e_{x_4} \prec_{\ce_0} e_{y_4} \prec_{\ce_0} e_{x_5} \prec_{\ce_0} e_{x_2}$.
Therefore, $\ce_0$ is infeasible.

\paragraph{Refinement of the first time.}
To prune more search space rather than just one counterexample, we should find the ``kernel reasons'' that make the counterexample infeasible.
According to our graph-based kernel reason analysis algorithm (cf. Section~\ref{sec:refinement}), the derived orders $h_1$ and $h_2$ are caused by $r_1$ and $r_3$, respectively (according to Rule \ref{rule:3}).
$h_1$ is derived as follows.
According to the order requirements of $r_1$, we have that $y_1 \prec_{\ce_0} y_3$, and no ``write'' of $y$ can be executed between $y_1$ and $y_3$.
According to the program orders, we have $y_1 \prec_{\ce_0} y_4$.
Hence, we can deduce that $y_3 \prec_{\ce_0} y_4$.
Given that the guard conditions for all these events are true, as long as $r_1$ holds in the counterexample, we may obtain the derived order $h_1$.
Hence, the reason of $h_1$ is $r_1$.
Similarly, we can obtain that the reason of $h_2$ is $r_3$.
Given that the guard conditions for all these events are true, the reason for any program order is \true.
And according to Section~\ref{sec:refinement}, the reason for any read-write link is itself.

A kernel reason of a cycle $C$ is a conjunction of those kernel reasons for those orders constructing $C$.
We can obtain that $C_1$ is caused by $r_1 \wedge r_3$, and $C_2$ is caused by $r_3 \wedge r_4$.
Given that $r_1$, $r_3$, and $r_4$ are represented by read-write link literals $s_{y,3,1}$, $s_{x,5,1}$, and $s_{x,4,2}$, respectively, we obtain that the kernel reason of $C_1$ is $\{s_{y,3,1}, s_{x,5,1}\}$, and the kernel reason of $C_2$ is $\{s_{x,5,1}, s_{x,4,2}\}$.
We hence use $\kappa_0 := \neg(s_{y,3,1} \wedge s_{x,5,1}) \wedge\neg(s_{x,5,1} \wedge s_{x,4,2})$ as the refinement constraint, which contains only two simple CNF clauses.

\paragraph{Second constraint solving.}
Let $\varphi_1 := \varphi_0 \wedge \kappa_0$.
When using $\varphi_1 \wedge \phi_{err}$ as input, the constraint solver returns SAT again, and produces a new counterexample $\ce_1$.

\paragraph{Second counterexample validation.}
By applying our graph-based \eog validation algorithm, the \eog corresponding to $\ce_1$ has two cycles.
Hence, $\ce_1$ is also infeasible.

\paragraph{Refinement, the second time.}
Again, we apply our graph-based kernel reason analysis algorithm.
The refinement constraint is formulated as $\kappa_1 := \neg(s_{x,4,3} \wedge s_{x,5,1}) \wedge\neg(s_{y,2,4} \wedge s_{x,4,3}) \wedge \neg(s_{y,4,3} \wedge s_{x,4,3})$, which contains three simple CNF clauses.
Here, one of these two cycles has two different kernel reasons.

\paragraph{Third constraint solving.}
Same as before, let $\varphi_2 := \varphi_1 \wedge \kappa_1$. When using $\varphi_2 \wedge \phi_{err}$ as input to a constraint solver, we get another counterexample $\ce_2$.

\paragraph{Third counterexample validation.}
By applying our graph-based \eog validation algorithm, the \eog corresponding to $\ce_2$ also has two cycles.
Hence, $\ce_2$ is also infeasible.

\paragraph{Refinement, the third time.}
By applying our graph-based kernel reason analysis algorithm, we obtain a new refinement constraint $\kappa_2 := \neg(s_{y,3,1} \wedge s_{y,2,5}) \wedge\neg(s_{x,5,2} \wedge s_{y,2,5})$, which contains two simple CNF clauses.

\paragraph{Constraint solving, the fourth time.}
Let $\varphi_3 := \varphi_2 \wedge \kappa_2$.
When using $\varphi_3 \wedge \phi_{err}$ as input, the constraint solver returns UNSAT this time, which indicates that the property is safe.

Given that the serial of constraint solving queries are incremental, they can be solved in an incremental manner.
From this example, we can observe that:
\begin{enumerate}
  \item Without the \tsc, the size of the abstraction is much smaller than that of the monolithic encoding. In this example, excluding the 3049 CNF clauses encoding the \texttt{pthread\_create} and \texttt{pthread\_join} function calls, the monolithic encoding contains 10214 CNF clauses, while the abstraction $\varphi_0$ contains only 1018 CNF clauses.
  \item With our graph-based refinement generation method, the verification problem can usually be solved with a small number of refinement iterations. In this example, only three refinements are required to verify the property.
  \item With our graph-based refinement generation method, the number of clauses increased during the refinement process can usually be ignored compared with that of the abstraction. In this example, only 7 simple CNF clauses are added during the refinement process, while the initial abstraction contains 4067 CNF clauses.
  \item Our graph-based \eog validation method is effective to identify the infeasibility in practice. In this example, all the four abstractions are infeasible. All of them can be detected by our graph-based \eog validation process. The constraint-based \eog validation and refinement processes have never been invoked.
\end{enumerate}

\section{\eog-Based Counterexample Validation}
\label{sec:validation}

\subsection{Counterexample and Event Order Graph}
\begin{definition}
\label{def:counterexample}
A counterexample $\ce$ of an abstraction $\varphi_i$, or a counterexample $\ce$ for short, is a set of assignments to the variables in $\varphi_i \wedge \phi_{err}$, where $\phi_{err}$ is the error states.
\end{definition}

A counterexample $\ce$ is an execution of the abstraction that falsifies the property.
It defines a trace for each thread and the read-write relationship among the ``reads'' and ``writes'' occurring in $\ce$.
Given that the \tsc is ignored in the abstraction, the counterexample may not be feasible, i.e., it may not correspond to any concrete execution.
Note that the execution order of those statements from different threads is not defined.
If the counterexample is feasible, it may correspond to multiple concrete executions.

Given a counterexample $\ce$, we use
$\ceevents \subseteq \events$ to denote the set of events occurring in
$\ce$.  We define a partial order $\ceprec\subseteq\ceevents\times\ceevents$.  Intuitively,
$e_1\ceprec e_2$ represents that ``$e_1$ should happen before $e_2$ in $\ce$'', i.e., $\clk(e_1) < \clk(e_2)$.
{We also use $\cepoprec$ to denote the restriction of
$\ppprec$ on $\ce$. In this case, $\cepoprec\subseteq\ppprec$}, and we have $\cepoprec \subseteq \ceprec$.
We also focus on the partial order
$\cerfprec$, which is called the \emph{read-from} relationship of $\ce$.
$e_1\cerfprec e_2$ (or we write $(e_1, e_2) \in \cerfprec$) represents that ``$(e_1, e_2)$ is a read-write link in $\ce$''.
According to this definition, we obtain $\cerfprec \subseteq \ceprec$.
An element in $\cepoprec$ (resp. in $\cerfprec$) is called a \emph{program order} (resp. \emph{read-from order}) of $\ce$.

%

A counterexample $\ce$ defines a quadruple $\langle s_{0, \ce}, \stms_\ce, \events_\ce, \rfprecc_\ce\rangle$, where $s_{0, \ce}$ is the initial states, $\stms_\ce$ is the set of statements contained in $\ce$, $\events_\ce$ is the set of events occurring in $\ce$, and $\rfprecc_\ce \subseteq \events_\ce \times \events_\ce$ is the read-from relationship that links each ``read'' $r \in \events_\ce$ to a ``write'' $w \in \events_\ce$ s.t. $r$ reads the value written by $w$.
Note that $\cepoprec$ is the restriction of $\prec_{P}^0$ to $\events_\ce$.
Table~\ref{tab:notation} lists the set of notations we use for a counterexample $\ce$.

\begin{table}
  \centering
  \caption{Notations for a counterexample $\ce$}
  \label{tab:notation}
  \begin{tabular}{|c|l|}
    \hline
    \multicolumn{1}{|c|}{Notation} &
    \multicolumn{1}{|l|}{Meaning} \\

    \hline
    $s_{0, \ce}$ & The initial states of $\ce$. \\
    $\stms_\ce$ & The set of statements contained in $\ce$ \\
    $\ceevents$  & The set of events occurring in $\ce$. \\
    $e_1\ceprec e_2$    & $e_1$ should happen before $e_2$ in $\ce$. \\
    $e_1\cepoprec e_2$    & $e_1\ceprec e_2$ according to the program order.\\
    $e_1\cerfprec e_2$ & $e_2$ reads the value written by $e_1$. \\
    \hline
  \end{tabular}
\end{table}

\begin{definition}
\label{def:ce-feasible}
A counterexample $\ce$ is feasible if a concrete execution $\execution$ of the original program can be constructed from $\ce$.
\end{definition}

To validate a counterexample $\ce$, we define a \emph{concrete execution} of a multi-threaded program as follows.
\begin{definition}
\label{def:execution}
Let $s_0$ be an initial state of $P$, $\stmo := \stm_0 \cdots \stm_n$ be a statement sequence, and $s \stackrel{\stm}{\rightarrow}s'$ indicate that $s'$ is the successor of $s$ by $\stm$.
The tuple $(s_0, \stmo)$ defines a concrete execution of $P$ iff there exists a state sequence $s_0 \cdots s_{n+1}$ s.t. $s_i \stackrel{\stm_i}{\rightarrow}s_{i+1}$ for all $0 \leq i \leq n$.
\end{definition}

According to Definition~\ref{def:execution}, to validate a counterexample $\ce$, one should find a statement sequence of $\stms_\ce$ that defines a concrete execution.
Note that for a concrete execution $\execution$, the events occurring in $\execution$ must satisfy the following order requirements: 1) for each program order $(e_1, e_2)$, we have $e_1$ happens before $e_2$; and 2) for each read-write link $(e_1, e_2)$, we have $e_1$ happens before $e_2$, and no write of $\var(e_1)$ happens between $e_1$ and $e_2$.
Given that those local statements do not affect other threads, the crucial issue to construct a concrete execution is to find a total order $<_\pi$ over $\events_\ce$ s.t. $<_\ce$ obeys all the above order requirements.
To address this problem, we introduce the \emph{event order graph (\eog)} notion to capture all order requirements of a counterexample.

\begin{definition}
\label{def:eog}
Given a counterexample $\ce$, the \emph{event order graph} (\eog) $G_\ce$ is a triple $\langle\events_\ce,\cepoprec,\cerfprec \rangle$, where the nodes are the events in $\events_\ce$, and the edges are the orders defined in $\cepoprec$ and $\cerfprec$.
Each node corresponds to either a ``read'' or a ``write'' of $\ce$, and each edge corresponds to either a program order or a read-from order of $\ce$.
For each edge corresponding to a program order $(e_1, e_2) \in \cepoprec$, it requires that $\clk(e_1) < \clk(e_2)$; and for each edge corresponding to a read-from order $(e_1, e_2) \in \cerfprec$, it requires that $\clk(e_1) < \clk(e_2)$ and $\forall e_3\in \events_\ce, ((\var(e_3) = \var(e_1))\wedge(\type(e_3) = \ww)) \Rightarrow (\clk(e_3) < \clk(e_1) \vee \clk(e_2) < \clk(e_3))$.
\end{definition}

\begin{definition}
\label{def:eog-feasible}
An \eog $G_\ce$ is feasible iff there exists a total order $<_\ce$ over $\events_\ce$ s.t. $<_\ce$ obeys all the order requirements defined in $G_\ce$.
\end{definition}

\begin{theorem}
\label{thm:eog_and_ce}
A counterexample $\ce$ is feasible iff the corresponding \eog $G_\ce$ is feasible.
\end{theorem}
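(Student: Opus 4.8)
The plan is to prove the biconditional in Theorem~\ref{thm:eog_and_ce} by chaining together the definitions that precede it, since the theorem is essentially a bridge lemma linking the operational notion of feasibility (Definition~\ref{def:ce-feasible}, which asks for a concrete execution in the sense of Definition~\ref{def:execution}) with the purely order-theoretic notion of \eog-feasibility (Definition~\ref{def:eog-feasible}, which asks for a total order $<_\ce$ respecting the edges of $G_\ce$). The key observation motivating the whole argument is that the only genuinely concurrent degree of freedom is the interleaving of events across threads: the text already remarks that local statements do not affect other threads, so a concrete execution is determined (up to irrelevant reordering of local statements) by a total order on $\events_\ce$. I would make this precise as the central technical step.

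For the forward direction ($\ce$ feasible $\Rightarrow G_\ce$ feasible), I would start from a concrete execution $(s_0, \stmo)$ guaranteed by Definition~\ref{def:ce-feasible}. The statement sequence $\stmo$ induces, via the clock values $\clk(\cdot)$ assigned to the events appearing in it, a total order $<_\ce$ over $\events_\ce$. I would then verify that this $<_\ce$ satisfies all the edge constraints of $G_\ce$: each program order $(e_1,e_2)\in\cepoprec$ is respected because the sequential and thread-creation/join semantics that define $\ppprec$ (and hence its restriction $\cepoprec$) force $e_1$ to be scheduled before $e_2$ in any valid execution; each read-from order $(e_1,e_2)\in\cerfprec$ is respected because, by the definition of a read-write link, $e_2$ reads the value written by $e_1$, which in a concrete execution requires $e_1$ to precede $e_2$ with no intervening write of $\var(e_1)$ — exactly the two conditions demanded of read-from edges in Definition~\ref{def:eog}. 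Thus $<_\ce$ witnesses the feasibility of $G_\ce$.

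For the converse ($G_\ce$ feasible $\Rightarrow \ce$ feasible), I would take a total order $<_\ce$ witnessing \eog-feasibility and use it to schedule the statements of $\stms_\ce$: order the global statements by the $<_\ce$-rank of their events, and insert each thread's local statements in their program-order positions relative to the global ones. This yields a candidate statement sequence $\stmo$, and I would argue that $(s_{0,\ce},\stmo)$ is a concrete execution per Definition~\ref{def:execution}, i.e.\ that the transition $s_i \stackrel{\stm_i}{\rightarrow} s_{i+1}$ is well-defined at every step. The crux is that every ``read'' event reads the correct value: because $<_\ce$ respects the read-from edges, when $e_2$ executes, the most recent write to $\var(e_2)$ is precisely its linked writer $e_1$ (no other write of $\var(e_1)$ lies between them), so the value the read observes agrees with the value assigned in $\ce$. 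Combined with the fact that program orders are respected (so data dependencies within each thread are honored), every statement is enabled in turn and the execution threads through to $s_{n+1}$.

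I expect the main obstacle to be making the converse direction airtight rather than the forward one. Two points need care: first, justifying that the read-from consistency ($e_2$ genuinely reads $e_1$'s value and not a stale or future one) follows from the ``no intervening write'' clause of Definition~\ref{def:eog} together with the $\zeta$ encoding that fixes the value relationship in $\ce$; and second, confirming that guards remain satisfied along the constructed trace, which uses the fact that $\ce$ is an assignment satisfying $\varphi_i\wedge\phi_{err}$ so all guard literals $\guard(e)$ of participating events are consistent with the scheduled control flow. Once these consistency checks are discharged, the equivalence of the two total-order characterizations closes the proof; the skeleton is otherwise a direct unfolding of the definitions.
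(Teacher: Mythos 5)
Your proposal follows essentially the same route as the paper's proof: the forward direction extracts the total order on $\events_\ce$ from the statement sequence of the concrete execution, and the converse schedules the global statements by the witnessing total order and then inserts local statements according to program order, exactly as the paper does. If anything, your treatment of the converse is slightly more careful than the paper's (you explicitly argue why each read observes its linked writer's value and why guards stay consistent, where the paper simply asserts that consistency with $\cepoprec$ and $\cerfprec$ yields a concrete execution), but this is elaboration of the same argument, not a different one.
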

\begin{proof}
\textbf{Sufficiency.} If $\ce$ is feasible, then we can construct a concrete execution $\execution$ from $\ce$. Suppose that the statement sequence of $\execution$ is $\stmo$. We order all the events in $\events_\ce$ as the execution order of those corresponding global statements in $\stmo$, and obtain a total order $<_\ce$ over $\events_\ce$, which is consistent with $\cepoprec$ and $\cerfprec$. Therefore, $G_\ce$ is feasible.

\textbf{Necessity.} We try to construct a concrete execution $\execution$ from $\pi$. If $G_\ce$ is feasible, then there must exist a total order $<_\pi$ over $\events_\ce$ that is consistent with $\cepoprec$ and $\cerfprec$. To obtain a statement sequence $\stmo$ of $\stms_\ce$,
we first order all the global statements in $\stms_\ce$ as the order of those corresponding events in $<_\pi$, and obtain a total order $<_g$ of all the global statements.
Then we ``place'' the local statements in $\stms_\ce$ into $<_g$.
Specifically, we first give a total order $t_0 \cdots t_n$ of all local statements according to the program order.
Afterward, we insert the local statements into $<_g$ according to this order and obtain a statement sequence $\stmo$.
For each local statement $t_i$, suppose that among all its predecessors of global statements (according to the program order), $t_i^g$ is lastly scheduled in $<_g$,
then $t_i$ is scheduled after both $t_{i-1}$ and $t_i^g$.
For instance, if $t_{i-1}$ is scheduled before $t_i^g$, then $t_i$ is scheduled after $t_i^g$.
Otherwise, $t_i$ is scheduled after $t_{i-1}$.
Given that $<_\pi$ is both consistent with $\cepoprec$ and $\cerfprec$, the tuple $(s_{0,\ce}, \stmo)$ is a concrete execution.
Therefore, $\ce$ is feasible.
\end{proof}

Now we ask, how can we validate the feasibility of an \eog?
An intuitive way is to exactly encode all the order requirements defined in Definition~\ref{def:eog} into a constraint.
The \eog is feasible iff the constraint is satisfiable.
However, as we will justify later
(cf. Section~\ref{sec:refinement} and~\ref{sec:experiment}), constraint solving is not effective enough for refinement generation.

\subsection{Graph-Based \eog Validation}
\label{subsec:4.1}

According to Definition~\ref{def:eog-feasible}, any edge $(e_1, e_2)$ of an \eog $G_\ce$ requires that $e_1 \ceprec e_2$.
Hence, an \eog must be infeasible if it contains some cycles.
Note that a read-from order $(e_1, e_2) \in \cerfprec$ further requires that no other write of $\var(e_1)$ could happen between $e_1$ and $e_2$.
Some implicit order requirements deducible from the \eog must exist.
We call them \emph{derived orders} of the \eog.
For each derived order $(e_1, e_2)$, it also requires that $\clk(e_1) < \clk(e_2)$, i.e., $e_1 \ceprec e_2$.

Consider the \eog shown in Fig.~\ref{fig:6}(a), where
$\{e_{x_0}, e_{x_2}\}$ and $\{e_{x_1}\}$ are the ``writes'' and ``reads'' of $x$, respectively.
$e_{x_0} \cepoprec e_{x_1}$, and $e_{x_2} \cerfprec e_{x_1}$.
We deduce that $e_{x_0} \prec_\ce e_{x_2}$ because $e_{x_0} \cepoprec e_{x_1}$ and $e_{x_2} \cerfprec e_{x_1}$, of which
the latter implies that no ``write'' of $x$ can happen between $e_{x_2}$ and $e_{x_1}$.

Based on this observation, we can deduce as many derived orders as possible first, and add them to $\ceprec$. If some cycle exists in $\ceprec$, then the \eog must be infeasible.
To this end, we propose the following three rules to produce derived orders. We initially obtain
$\ceprec := \cepoprec \cup \cerfprec$.

\begin{rrule}
\label{rule:1}
$\ifrule{e_1\ceprec e_2,\  e_2\ceprec e_3}
  {e_1\ceprec e_3}$.
\end{rrule}

Rule \ref{rule:1} only reflects the transitivity of $\prec_\ce$.

\begin{rrule}
\label{rule:2}
$\ifrule{e_1\cerfprec e_2, \  e_3\ceprec e_2, \
  \type(e_3)=\ww, \  \var(e_3)=\var(e_1)}
{e_3\ceprec e_1}$.
\end{rrule}

Given that $e_1 \cerfprec e_2$, $\type(e_3) = \ww$, and $\var(e_3) = \var(e_2)$, then either $\clk(e_3)<\clk(e_1)$ or $\clk(e_2)<\clk(e_3)$.
Given that $e_3 \ceprec e_2$ holds, then $\clk(e_3)<\clk(e_2)$ can be obtained.
Therefore, we have $\clk(e_3)<\clk(e_1)$, which implies $e_3 \ceprec e_1$.

\begin{figure}
\centering
\includegraphics[width=8.5cm]{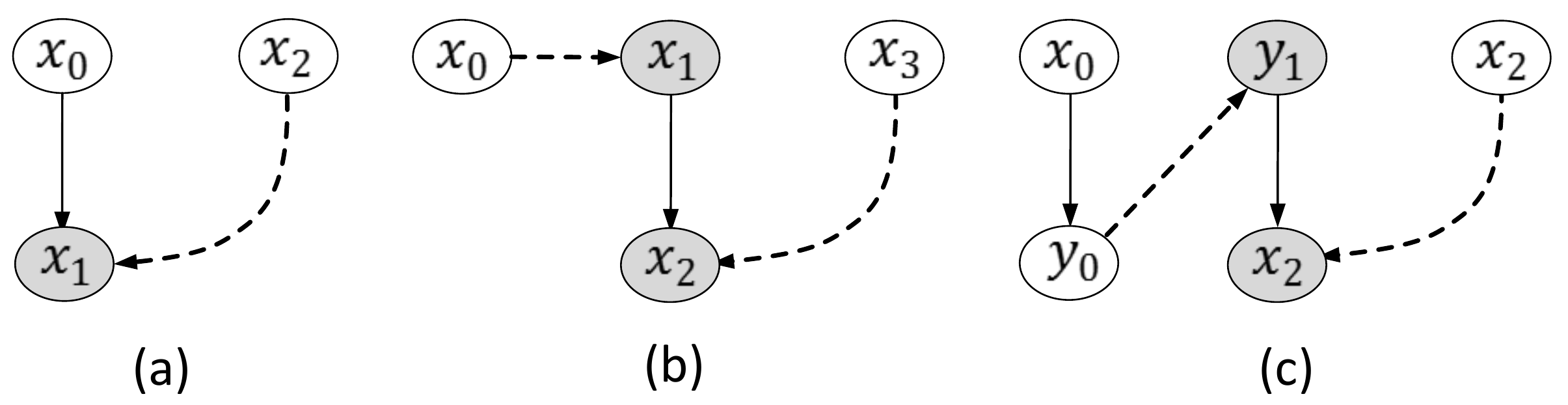}
\caption{Three \eog examples}
\label{fig:6}
\end{figure}

\begin{figure}
\centering
\includegraphics[width=8.5cm]{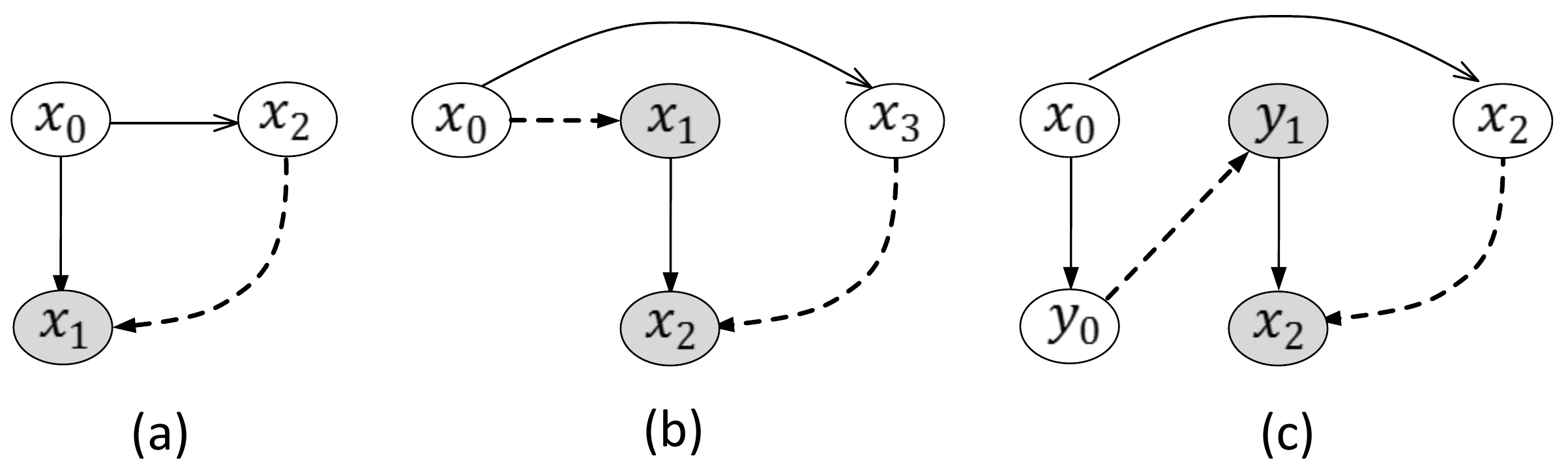}
\caption{Derived orders deduced by applying Rule~\ref{rule:2} to the three {\eog}s shown in Fig.~\ref{fig:6}.}
\label{fig:7}
\end{figure}

Similarly, we propose the following deductive rule:

\begin{rrule}
\label{rule:3}
$\ifrule{e_1\cerfprec e_2,\  e_1\ceprec e_3, \
  \type(e_3)=\ww,\  \var(e_3)=\var(e_1)}
{e_2\ceprec e_3}$.
\end{rrule}

According to these three rules, for the three {\eog}s presented in Fig.~\ref{fig:6}, we can deduce $e_{x_0} \prec_\ce e_{x_2}$, $e_{x_0} \prec_\ce e_{x_3}$, and $e_{x_0} \prec_\ce e_{x_2}$, respectively.
We add these terms to the corresponding {\eog}s as shown in Fig.~\ref{fig:7}.

When a derived order $(e_1, e_2)$ is added to $\ceprec$, some new orders may be propagated via Rules~\ref{rule:1} to \ref{rule:3}.
We repeat this process until we reach a \emph{fixpoint}, i.e., no derived order can be deduced any more.
If some cycle exists in $\ceprec$, then the \eog is infeasible.

Now a conjecture is that, an \eog is feasible if it contains no cycle.
Such a conjecture holds for almost all examples in our experiments.
However, this conjecture may still be false for some special examples. Consider the ``butterfly'' example in Fig.~\ref{fig:8} that involves six threads $\{P_0, P_1, \cdots, P_5\}$ and five shared variables $\{m, n, x, y, w\}$.
No derived order can be deduced, and no cycle exists in the \eog.
However, no total order of $\events_\ce$ can satisfy all the order requirements defined in the \eog, and the \eog is infeasible.

\begin{figure}
\centering
\includegraphics[width=6cm]{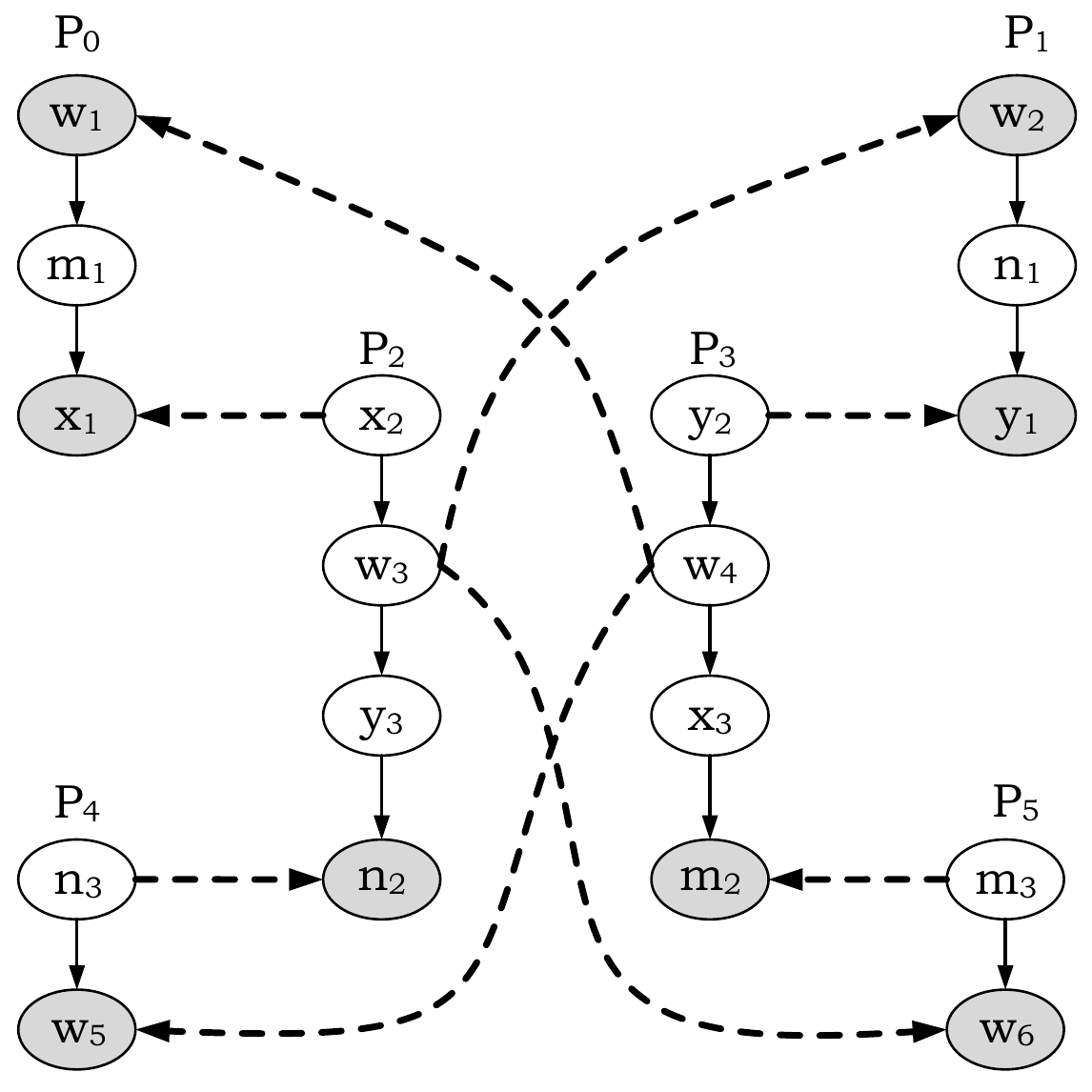}
\caption{The ``butterfly'' example}
\label{fig:8}
\end{figure}

Our graph-based \eog validation method is shown in Algorithm~\ref{alg:eval}.
It repeatedly applies Rules~\ref{rule:1} to~\ref{rule:3} to deduce new derived orders, until a fixpoint is reached.
If there exists some conflict event $e$ s.t. $e\ceprec e$, then the \eog must be infeasible.
Otherwise, it is not sure whether the \eog is feasible or not.

\begin{algorithm}
  \caption{Graph-based \eog validation}
  \label{alg:eval}
  \KwIn{An \eog $G_\ce=\langle\events_\ce,\cepoprec, \cerfprec\rangle$}
  \KwOut{Infeasible: $G_\ce$ is infeasible; Not-Sure: not sure whether $G_\ce$ is feasible or not}
  \Repeat {\rm $\ceprec$ reaches a fixpoint}
  {
    \If{\rm there exist $(e_1,e_2),(e_2,e_3)\in\ceprec$ and $(e_1,e_3)\not\in\ceprec$ }
    {$\ceprec=\ceprec\cup(e_1,e_3)$\;}
    \If{\rm there exists $(e_1,e_2)\in\cerfprec$ exists $(e_3,e_2)\in\ceprec$ and
    $(e_3,e_1)\not\in\ceprec$}
  {\If{\rm $\type(e_3)=\ww$ and $\var(e_3)=\var(e_1)$}{
        $\ceprec:=\ceprec\cup(e_3,e_1)$\;
    }}
    \If{\rm there exists $(e_1,e_2)\in\cerfprec$ exists $(e_1,e_3)\in\ceprec$ and
      $(e_2,e_3)\not\in\ceprec$}
    {\If{\rm $\type(e_3)=\ww$ and $\var(e_3)=\var(e_1)$}
      {$\ceprec:=\ceprec\cup(e_2,e_3)$\;}}
}
\If{\rm $e \ceprec e$ for some $e$}{\Return Infeasible\;}

\Return{Not-Sure}\;
\end{algorithm}

We now prove the correctness of this algorithm.

\begin{theorem}
\label{thm:impliclit_alg1}
If Algorithm~\ref{alg:eval} concludes that $G_\ce$ is infeasible, then $G_\ce$ must be infeasible.
\end{theorem}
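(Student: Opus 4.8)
The plan is to prove a soundness statement: every ordered pair ever recorded in $\ceprec$ during a run of Algorithm~\ref{alg:eval} is a \emph{necessary} consequence of the \eog's requirements, so that a reflexive pair $e\ceprec e$ can never be compatible with the existence of a feasible total order. I would argue by contradiction. Assume $G_\ce$ is feasible even though the algorithm returns Infeasible. By Definition~\ref{def:eog-feasible}, feasibility supplies a total (hence strict, irreflexive) order $<_\ce$ over $\ceevents$ that obeys every program order in $\cepoprec$ and every read-from order in $\cerfprec$, the latter including the ``no intervening write'' clause of Definition~\ref{def:eog}.

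The key step is a lemma asserting that this fixed $<_\ce$ satisfies \emph{every} pair inserted into $\ceprec$ by the algorithm, i.e.\ $e_1\ceprec e_2$ implies $e_1 <_\ce e_2$. I would prove it by induction on the order in which pairs enter $\ceprec$. For the base case, the initial relation is $\cepoprec\cup\cerfprec$, all of whose pairs $<_\ce$ satisfies by feasibility. For the inductive step I would case-split on which rule added the new pair and reuse the per-rule justifications already given in the text: Rule~\ref{rule:1} adds $e_1\ceprec e_3$ from premises that $<_\ce$ already satisfies by the induction hypothesis, and transitivity of $<_\ce$ yields $e_1<_\ce e_3$; Rules~\ref{rule:2} and~\ref{rule:3} add a pair forced by the read-from clause, since for $e_1\cerfprec e_2$ any write $e_3$ of $\var(e_1)$ must satisfy either $\clk(e_3)<\clk(e_1)$ or $\clk(e_2)<\clk(e_3)$ under $<_\ce$, and combining this disjunction with the already-satisfied premise ($e_3<_\ce e_2$ for Rule~\ref{rule:2}, or $e_1<_\ce e_3$ for Rule~\ref{rule:3}) eliminates one disjunct and forces the conclusion ($e_3<_\ce e_1$, resp.\ $e_2<_\ce e_3$).

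With the lemma in hand the conclusion is immediate: the algorithm returns Infeasible only when $e\ceprec e$ holds for some $e$ at the fixpoint, so the lemma gives $e<_\ce e$, contradicting the irreflexivity of the strict total order $<_\ce$ (equivalently, $\clk(e)<\clk(e)$ is impossible). Hence no feasible total order exists and $G_\ce$ is infeasible, as claimed.

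I expect the only delicate point to be the inductive step for Rules~\ref{rule:2} and~\ref{rule:3}: one must apply the ``no intervening write'' disjunction of Definition~\ref{def:eog} to exactly the write event $e_3$ named in the rule, checking that the side conditions $\type(e_3)=\ww$ and $\var(e_3)=\var(e_1)$ are precisely what licenses that disjunction, and then verify that the correct disjunct is eliminated. Everything else reduces to transitivity and irreflexivity of $<_\ce$, so the induction mainly has to thread the already-stated rule justifications together correctly.
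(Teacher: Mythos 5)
Your proposal is correct and follows essentially the same route as the paper's own proof: assume feasibility for contradiction, show the resulting total order $<_\ce$ must obey every derived order produced by Rules~\ref{rule:1}--\ref{rule:3}, and conclude that $e \ceprec e$ forces the impossible $\clk(e) < \clk(e)$. The only difference is that you spell out, as an explicit induction with a per-rule case analysis, the step the paper compresses into the single sentence ``According to Rules~\ref{rule:1} to~\ref{rule:3}, $<_\ce$ also obeys all the derived orders,'' which is a welcome clarification rather than a departure.
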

\begin{proof}
If Algorithm~\ref{alg:eval} concludes that $G_\ce$ is infeasible, then there must exist some conflict event $e$ s.t. $e \ceprec e$.
Suppose that $G_\ce$ is feasible.
Then according to Definition~\ref{def:eog-feasible}, there must exist a total order $<_\ce$ s.t. $<_\ce$ obeys all order requirements defined in $G_\ce$.
According to Rules~\ref{rule:1} to~\ref{rule:3}, we have $<_\ce$ also obeys all the derived orders deduced in Algorithm~\ref{alg:eval}.
Hence, $\clk(e) < clk(e)$ in $<_\ce$, which is impossible for a total order.

Therefore, the theorem is proved.
\end{proof}

\subsection{Constraint-Based \eog Validation}
If the graph-based \eog validation method is not sure whether an \eog is feasible or not, we employ a constraint solver to further determine feasibility of the \eog.
The method is that, we exactly encode all the order requirements defined in Definition~\ref{def:eog} into a constraint.
The \eog is feasible iff the constraint is satisfiable.
If a SAT is returned, then it generates a total order of all the events which satisfies all the order requirements of the \eog as a byproduct.
Otherwise, both the \eog and the counterexample are infeasible and an unsatisfiable core is generated as a byproduct.

\section{Kernel Reason Based Refinement Generation}
\label{sec:refinement}

If a counterexample is determined to be infeasible, one should add some constraints to the abstraction to prevent this counterexample from appearing again in the future search.
The most intuitive way to address this problem is to add the negation of this counterexample to the abstraction.
However, such a manner excludes only one counterexample in each refinement.
To prune more search space, a better idea is to analyze the kernel reasons that make the counterexample infeasible, and then adds the negation of these kernel reasons to the next abstraction.
Given that a counterexample is validated via validating its corresponding \eog, we analyze the kernel reasons that make an \eog infeasible, s.t. a large amount of space can be pruned in each refinement iteration.

\subsection{Representation of Kernel Reasons}
Given a counterexample $\ce$, the corresponding \eog $G_\ce :=\langle \events_\ce, \cepoprec, \cerfprec\rangle$ is determined by the values of the following two kinds of literals:
1) The values of those \emph{guard condition literals} for the events in $\events$. They determine both $\events_\ce$ and $\cepoprec$. $\events_\ce$ is the set of events appeared in $\ce$. $\cepoprec$ is the restriction of $\ppprec$ to $\events_\ce$.
2) The values of those \emph{read-write link literals} which define the read-from relationship $\cerfprec$.

Let $\mathbb{G}_\ce$ and $\mathbb{S}_\ce$ denote the sets of true guard condition literals and true read-write link literals in $\ce$, respectively.
If $G_\ce$ is infeasible, then the infeasibility could be deduced by the conjunction of all literals in $\mathbb{G}_\ce \cup \mathbb{S}_\ce$.
The infeasibility can often be deduced by a subset of $\mathbb{G}_\ce \cup \mathbb{S}_\ce$.
Therefore, the \emph{kernel reasons} that make a counterexample $\ce$ infeasible could be represented by the minimal subsets of $\mathbb{G}_\ce \cup \mathbb{S}_\ce$ that could deduce the infeasibility.
Finding the minimal subsets not only reduces the constraint size but also prunes more search space.

\subsection{Constraint-Based Kernel Reason Analysis}
If the infeasibility is identified by the constraint-based \eog validation process, the constraint solver can return an unsatisfiable core as a byproduct.
Modern constraint solvers, such as \minisat, allow their users to take a set of literals as assumptions.
When an UNSAT is returned, the constraint solver generates a subset of the assumption literals as an unsatisfiable core.
Based on this idea, we take all literals in $\mathbb{G}_\ce \cup \mathbb{S}_\ce$ as assumption literals.
Whenever the \eog is determined to be infeasible, the constraint solver generates a subset of $\mathbb{G}_\ce \cup \mathbb{S}_\ce$ that can still deduce the infeasibility.
Suppose that it is $\{\ell_1, \ell_2, \cdots, \ell_n\}$.
Let $\krf := \ell_1 \wedge \ell_2 \wedge \cdots \wedge \ell_n$.
The refinement constraint can then be formulated as follows.
\begin{eqnarray}
\label{eq:usc}
\kappa := \neg \krf
\end{eqnarray}

If the constraint that exactly encodes all order requirements of an \eog is unsatisfiable, it may have a large number of unsatisfiable cores.
To prune as much search space as possible, one should obtain all of them.
However, generating all unsatisfiable cores is usually time consuming.
Most constraint solvers generate only one unsatisfiable core each time, and it may not be the shortest one, which significantly limits the pruned search space in each refinement.
Hence, constraint solving is not a good choice for our refinement generation.

\subsection{Graph-Based Kernel Reason Analysis}
This section presents our graph-based kernel reason analysis method.
Compared with the constraint-based method, it usually obtains a much more effective refinement.
It can efficiently obtain all kernel reasons that make an \eog infeasible.
In addition, the obtained ``core kernel reasons'' are always the shortest ones.
Hence, it can usually prune much more search space in each iteration.

In Algorithm~\ref{alg:eval}, if an infeasibility is determined, then there must exist some conflict event $e$ s.t. $e \prec_\ce e$.
A conflict event can usually be attributed to several kernel reasons, and there are usually many conflict events.
To prune more search space, one should find all kernel reasons of all conflict events.

We define ``a kernel reason of an order $\order \in \ceprec$'' to be ``the minimal subset of $\mathbb{G}_\ce \cup \mathbb{S}_\ce$ that can deduce $\order$''.
When a derived order $\order$ is deduced, if the kernel reasons of all existing orders (including existing derived orders) are given, then we can obtain a set of kernel reasons of $\order$ upon its production.
Note that a derived order $\order$ may be deduced for multiple times.
Whenever it is deduced, we can obtain new kernel reasons of $\order$.
Based on this observation, we initialize the kernel reasons of every order $\order$ to $\emptyset$.
The kernel reasons of an order $\order$ are then updated once $\order$ is added (we add the orders in $\cepoprec$ and $\cerfprec$ into the graph one by one) or deduced.
In this manner, we obtain the kernel reasons of each order $\order \in \ceprec$ when Algorithm~\ref{alg:eval} terminates.

We then discuss how to update the kernel reasons of an order $\order \in \ceprec$ when $\order$ is added or deduced.
We hypothesize that the kernel reasons of all existing orders are given.
We denote by $\kr(\order)$ and $\krset(\order)$ a kernel reason and the set of kernel reasons of $\order$, respectively.

\begin{itemize}
  \item If $\order := (e_1, e_2)$ is added because $\order \in \cepoprec$, then $e_1 \ceprec e_2$ iff both $e_1, e_2 \in \events_\ce$, i.e., the guard condition literals for both $e_1$ and $e_2$ are true.
    Therefore, $\krset(\order) := \krset(\order) \cup \{\{\guard(e_1), \guard(e_2)\}\}$, where $\guard(e)$ is the guard condition literal of $e$.

  \item If $\order := (e_1, e_2)$ is added because $\order \in \cerfprec$, $\order$ holds iff $e_2$ reads the value written by $e_1$, which already indicates that both $\guard(e_1)$ and $\guard(e_2)$ are true.
      Therefore, $\krset(\order) := \krset(\order) \cup \{\{\sel(\lambda)\}\}$, where $\sel(\lambda)$ is the read-write link literal of $\order$.
  \item If $\order$ is a derived order deduced from $\order_1$ and $\order_2$, then $\order \in \ceprec$ iff both $\order_1$ and $\order_2$ belong to $\ceprec$.
      Therefore, $\krset(\order):=\krset(\order) \cup \{\kr_1\cup\kr_2 \mid \kr_i\in\krset(\lambda_i)\}$. Note that $\order$ may be deduced for multiple times. We incrementally update $\krset(\order)$ whenever $\order$ is deduced.
\end{itemize}

A kernel reason $\kr \in \krset(\order)$ is considered \emph{redundant} if some kernel reason $\kr' \in \krset(\order)$ exists s.t. $\kr' \subseteq \kr$.
Such kernel reasons are immediately eliminated from $\krset(\order)$, and the remaining reasons are called \emph{core kernel reasons} of $\order$.
Using this strategy, we maintain only the set of core kernel reasons in our algorithm.
This set is dynamically updated during the running of the algorithm.

In this manner, we obtain the set of core kernel reasons of any order $\order \in \ceprec$ when Algorithm~\ref{alg:eval} terminates.
Let $\conflicts$ denote the set of conflict events.
The kernel reasons that make $\ce$ infeasible (denoted by $\krset(\ce)$) can be expressed as follows.

\begin{equation}\label{eq:kr}
  \krset(\ce) := \bigcup_{e \in \conflicts}\krset(e \ceprec e)
\end{equation}

Again, we eliminate those redundant kernel reasons from $\krset(\ce)$, and maintain only those core kernel reasons.
In our experiments, hundreds or even thousands of kernel reasons may be observed, but only several or dozens of them are considered core ones.

Suppose that $\kr := \{\ell_1, \ell_2, \cdots, \ell_m\}$ where each $\ell_i$ is a literal,
we define the following:
\begin{equation}
\label{equ:krc}
  \krf_{\kr} := \ell_1 \wedge \ell_2 \wedge \cdots \wedge \ell_m
\end{equation}

Suppose that $\krset(\ce) := \{\kr_1, \kr_2, \cdots, \kr_n\}$, then the refinement constraint is formulated as follows.
\begin{equation}
\label{equ:refinement}
  \kappa := \bigwedge_{i=1}^n\neg \krf_{\kr_i}
\end{equation}

Algorithm~\ref{alg:refienment} demonstrates our graph-based refinement generation method.
We first add all the program orders into $\ceprec$, and update their kernel reasons according to the kernel reason updating method we have discussed.
Adding a read-from order or a derived order to $\ceprec$ may propagate a large number of new orders.
Whenever a read-from order $\order \in \cerfprec$ is added to $\ceprec$, we denote by $\mathbb{D}$ the set of orders that will be added to $\ceprec$ before another read-from order is added.
Adding each order $\order' \in \mathbb{D}$ to $\ceprec$ may propagate a set of derived orders, which are denoted by $\mathbb{B}$.
Whenever an order $\order'' \in \mathbb{B}$ is deduced, we update its kernel reasons and add it to $\mathbb{D}$ if it is not contained in $\ceprec$.
In this manner, once all read-from orders have been added to $\ceprec$, we obtain all derived orders and the kernel reasons of all orders in $\ceprec$.
We then compute the refinement constraint according to equation (\ref{eq:kr}), (\ref{equ:krc}) and (\ref{equ:refinement}).

\begin{algorithm}
  \caption{Graph-based refinement generation.}
  \label{alg:refienment}
  \KwIn{An \eog $G_\ce :=\langle\events_\ce,\cepoprec, \cerfprec\rangle$.}
  \KwOut{A refinement constraint $\kappa$.}
    $\ceprec := \cepoprec$, and update $\krset(\order)$ for each $\order \in \cepoprec$\;
    \ForEach{$\order \in \cerfprec$}{
        Update $\krset(\order)$, and let $\mathbb{D} := \{\order\}$\;
        \ForEach{$\order' \in \mathbb{D}$}{
            $\ceprec := \ceprec \cup \{\order'\}$\;
            Let $\mathbb{B}$ be the set of propagated orders due to $\order'$\;
            \ForEach{$\order'' \in \mathbb{B}$}{
                Update $\krset(\order'')$\;
                \If{$\order'' \notin \ceprec$}{
                    $\mathbb{D} := \mathbb{D} \cup \{\order''\}$\;
                }
            }
        }
    }
    Compute $\kappa$ according to equation (\ref{eq:kr}), (\ref{equ:krc}) and (\ref{equ:refinement})\;
    \Return $\kappa$\;
\end{algorithm}

From the above discussion, the graph-based refinement generation method has the following advantages:
\begin{enumerate}
\item It can detect all kernel reasons that make $\ce$ infeasible, leading to a large amount of search space pruned in each iteration.
\item It maintains a minimal subset of core kernel reasons, thereby making the refinement constraint small and manageable.
\end{enumerate}

\subsection{Correctness of the Graph-Based Kernel Reason Analysis}

We prove the correctness of our graph-based refinement generation method, that is, the refinement constraint obtained in equation (\ref{equ:refinement}) should be true in any feasible counterexample, i.e., it will not eliminate any feasible counterexample from the abstraction.

\begin{theorem}
\label{thm:krinsensitive}
Given a counterexample $\ce$ and a kernel reason $\kr(\order)$ obtained according to our graph-based kernel reason analysis method, we have $\kr(\order) \models \order$. That is, for any other counterexample $\ce'$, we also have $\order\in\prec_{\ce'}$ if $\ce'\models\krf_{\kr(\order)}$.
\end{theorem}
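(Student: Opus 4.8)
The plan is to prove $\kr(\order) \models \order$ by induction on the step at which the particular kernel reason $\kr(\order)$ is assigned during the run of Algorithm~\ref{alg:refienment}. Recall that every order in $\ceprec$ is either a \emph{base order}---a program order in $\cepoprec$ or a read-from order in $\cerfprec$---or a \emph{derived order} obtained from two already-available orders via one of Rules~\ref{rule:1}--\ref{rule:3}, and that the kernel-reason update (the three bullet cases preceding Algorithm~\ref{alg:refienment}) sets $\kr(\order)$ in terms of the kernel reasons of the premises. Since a derived order's kernel reason has the form $\kr_1\cup\kr_2$ with $\kr_i\in\krset(\order_i)$, those premise kernel reasons are always assigned no later than $\kr(\order)$, so the induction is well founded.

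For the base cases I would argue directly from the meaning of the literals under the assignment $\ce'$. If $\order:=(e_1,e_2)\in\cepoprec$, then $\kr(\order)=\{\guard(e_1),\guard(e_2)\}$; whenever $\ce'\models\krf_{\kr(\order)}$ both guard literals are true in $\ce'$, hence $e_1,e_2\in\events_{\ce'}$, and because $\ppprec$ is fixed by the program $P$ and is independent of the chosen counterexample, $(e_1,e_2)$ lies in the restriction of $\ppprec$ to $\events_{\ce'}$, i.e. $\order\in\prec_{\ce'}^0\subseteq\prec_{\ce'}$. If instead $\order:=(e_1,e_2)\in\cerfprec$, then $\kr(\order)=\{\sel(\order)\}$, and $\ce'\models\sel(\order)$ says exactly that $(e_1,e_2)$ is a read-write link of $\ce'$, i.e. $\order\in\lhd_{\ce'}\subseteq\prec_{\ce'}$.

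For the inductive step, suppose $\order$ is derived from premises $\order_1,\order_2$ by one of Rules~\ref{rule:1}--\ref{rule:3}, with $\kr(\order)=\kr_1\cup\kr_2$ and $\kr_i\in\krset(\order_i)$. Since $\kr_i\subseteq\kr(\order)$ and, by~(\ref{equ:krc}), $\krf_{\kr(\order)}$ is the conjunction of the literals of $\kr(\order)$, the assumption $\ce'\models\krf_{\kr(\order)}$ forces $\ce'\models\krf_{\kr_i}$ for $i=1,2$; the induction hypothesis then gives $\order_1,\order_2\in\prec_{\ce'}$. The key point is that the side conditions of each rule---transitivity for Rule~\ref{rule:1}, and the predicates $\type(e_3)=\ww$ and $\var(e_3)=\var(e_1)$ together with the read-from premise for Rules~\ref{rule:2} and~\ref{rule:3}---refer only to static attributes of the events, which are identical across all counterexamples. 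Hence the very same rule instance applies in $\ce'$ and derives $\order\in\prec_{\ce'}$, closing the induction; extending the result from a single kernel reason to the refinement constraint~(\ref{equ:refinement}) is then immediate.

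The main obstacle is not any individual case but the bookkeeping that makes the induction legitimate: one must confirm that the production order of Algorithm~\ref{alg:refienment} respects the premise-before-conclusion dependency, and that the claim concerns each individual kernel reason $\kr(\order)\in\krset(\order)$ separately. An order may be re-deduced several times and accumulate several kernel reasons, but equation~(\ref{equ:refinement}) only requires that each such reason entails $\order$, which the induction delivers. The one genuinely conceptual ingredient is the counterexample-independence of the rules' structural side conditions; once that is isolated, the transfer of each rule application from $\ce$ to $\ce'$ is routine.
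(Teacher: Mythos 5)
Your proof follows the same route as the paper's: induction over the derivation of the kernel reason, with base cases for program orders and read-from orders, and an inductive step that re-applies the deduction rule inside $\ce'$. Your write-up is more explicit than the paper's about well-foundedness and about the claim being per kernel reason rather than per order, and your base cases are correct.

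However, the justification of your inductive step for Rules~\ref{rule:2} and~\ref{rule:3} rests on a false claim: you assert that the read-from premise ``refer[s] only to static attributes of the events, which are identical across all counterexamples.'' The predicates $\type(e_3)=\ww$ and $\var(e_3)=\var(e_1)$ are indeed static, but the read-from premise $e_1 \cerfprec e_2$ is not: the read-from relation is precisely the counterexample-dependent datum, fixed by which read-write link literals are true in $\ce$, and its variability is the whole reason kernel reasons must carry $\sel$ literals at all. Nor does your induction hypothesis cover this premise: the IH yields only $(e_1,e_2)\in\prec_{\ce'}$, a bare ordering, whereas Rules~\ref{rule:2} and~\ref{rule:3} need the stronger fact that $(e_1,e_2)$ is a read-write link of $\ce'$ (so that no write of $\var(e_1)$ may intervene between them); an ordering alone licenses neither rule. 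The repair is local and uses an ingredient you already isolated in your base case: when Rule~\ref{rule:2} or~\ref{rule:3} fires in $\ce$ with read-from premise $\order_1$, the kernel reason contributed by that premise is $\{\sel(\order_1)\}$, so $\ce'\models\krf_{\kr(\order)}$ forces $\ce'\models\sel(\order_1)$, which by definition of the read-write link literal means $(e_1,e_2)\in\lhd_{\ce'}$, and the same rule instance then fires in $\ce'$. Concretely, strengthen the induction statement so that for a read-from premise what transfers to $\ce'$ is membership in $\lhd_{\ce'}$, not merely in $\prec_{\ce'}$ (this also implicitly requires that the algorithm pairs a read-from premise with its $\sel$-literal kernel reason rather than with an arbitrary element of its kernel-reason set). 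With that change your induction closes; the paper's own proof is silent on this very point, so the repaired argument is in fact more complete than the original.
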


\begin{proof}
We prove this theorem by induction.

\textbf{Inductive Base}:
If $\kr(\order)$ is obtained because $\order \in \cepoprec$ or $\order \in \cerfprec$, then the conclusion can be immediately inferred from the definition.

\textbf{Inductive Step}:
If $\kr(\order)$ is obtained via Rule~\ref{rule:1}, \ref{rule:2}, or \ref{rule:3}, then two orders $\order_1$ and $\order_2$ must exist, such that $\kr(\order) = \kr(\order_1) \cup \kr(\order_2)$. Given that $\krf_{\kr(\order)}$ holds w.r.t. $\ce'$, $\krf_{\kr(\order_1)}$ and $\krf_{\kr(\order_2)}$ are also true w.r.t. $\ce'$. By applying the induction hypothesis, we obtain $\order_1, \order_2 \in \prec_{\ce'}$.
According to the same rule, we deduce that $\order \in \prec_{\ce'}$.
\end{proof}

\begin{theorem}
\label{thm:krcorr}
Given a kernel reason $\kr \in \krset(\ce)$ of an infeasible counterexample $\ce$, for any feasible counterexample $\ce'$, we have $\ce'\models\neg\krf_\kr$.
\end{theorem}

\begin{proof}
Given that $\kr$ is a kernel reason that makes $\ce$ infeasible, according to equation~(\ref{eq:kr}), there must exist a corresponding order $\order := (e, e) \in \ceprec$.
Suppose that $\ce'\models\krf_\kr$.
Then according to Theorem~\ref{thm:krinsensitive}, we have $\order := (e, e) \in \prec_{\ce'}$.
It indicates that $\ce'$ is infeasible, which is contradict with that $\ce'$ is feasible.
Hence, we must have $\ce'\models\neg\krf_\kr$.
\end{proof}

\begin{theorem}
\label{thm:rccorr}
Given a refinement constraint $\kappa$ obtained in some iteration, for any feasible counterexample $\ce'$, we have $\ce'\models\kappa$.
\end{theorem}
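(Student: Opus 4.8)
Looking at Theorem~\ref{thm:rccorr}, I need to prove that any refinement constraint $\kappa$ obtained in some iteration holds in every feasible counterexample $\ce'$.

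The plan is to unfold the definition of $\kappa$ and reduce the claim to the previously established Theorem~\ref{thm:krcorr}. Recall from equation~(\ref{equ:refinement}) that the refinement constraint is formed as $\kappa := \bigwedge_{i=1}^n \neg \krf_{\kr_i}$, where $\{\kr_1, \kr_2, \cdots, \kr_n\} = \krset(\ce)$ is the set of (core) kernel reasons of the infeasible counterexample $\ce$ that triggered this refinement iteration. So establishing $\ce' \models \kappa$ amounts to establishing $\ce' \models \neg \krf_{\kr_i}$ for every $i$, since a conjunction holds exactly when each conjunct holds.

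The key observation is that each individual conjunct is precisely the conclusion of Theorem~\ref{thm:krcorr}. First I would fix an arbitrary feasible counterexample $\ce'$ and an arbitrary index $i \in \{1, \ldots, n\}$. Since $\kr_i \in \krset(\ce)$ is a kernel reason of the infeasible counterexample $\ce$, and $\ce'$ is feasible, Theorem~\ref{thm:krcorr} applies directly and gives $\ce' \models \neg \krf_{\kr_i}$. Because $i$ was arbitrary, this holds for all conjuncts simultaneously, and therefore $\ce' \models \bigwedge_{i=1}^n \neg \krf_{\kr_i} = \kappa$. This completes the argument.

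This theorem is essentially a bookkeeping corollary that lifts the per-kernel-reason soundness of Theorem~\ref{thm:krcorr} to the full refinement constraint, so there is no genuine obstacle — the only thing to be careful about is making the logical structure explicit: $\kappa$ is a conjunction over \emph{all} the core kernel reasons produced in the iteration, and each conjunct is handled by the same prior theorem. The underlying mathematical content (that firing any kernel reason forces the self-loop $(e,e)$ into $\prec_{\ce'}$, contradicting feasibility) was already discharged in Theorems~\ref{thm:krinsensitive} and~\ref{thm:krcorr}; here I am simply invoking it uniformly. One might also note that this result is exactly what is needed to guarantee soundness of the overall abstraction refinement loop, since it certifies that no refinement step ever rules out a feasible counterexample.
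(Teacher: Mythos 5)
Your proof is correct and takes essentially the same route as the paper: unfold $\kappa$ via equation~(\ref{equ:refinement}) into the conjunction $\bigwedge_{i=1}^n \neg\krf_{\kr_i}$ over the kernel reasons of the infeasible counterexample $\ce$, apply Theorem~\ref{thm:krcorr} to each conjunct, and conclude $\ce'\models\kappa$. The paper's own proof is exactly this three-line reduction, so there is nothing to add.
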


\begin{proof}
Suppose that $\kappa := \bigwedge_{i=1}^n\neg \krf_{\kr_i}$, and the corresponding counterexample is $\ce$.
According to Theorem~\ref{thm:krcorr}, for any $i$ ($1 \leq i \leq n$), we have $\ce'\models\neg\krf_{\kr_i}$.
Hence, $\ce'\models\kappa$.
\end{proof}

\section{Soundness and Efficiency}
\label{sec:soundness}

\subsection{Soundness and Completeness}
We prove the soundness and completeness of our method (shown in Fig.~\ref{fig:x}) via three theorems.

\begin{theorem}
\label{thm:true}
If our method concludes that the property is safe, then it must be safe w.r.t. the given loop unwinding depth.
\end{theorem}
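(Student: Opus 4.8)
The method outputs ``safe'' exactly when, at some refinement iteration $i$, the constraint solver reports that $\varphi_i \wedge \phi_{err}$ is unsatisfiable, where $\varphi_i := \varphi_0 \wedge \kappa_0 \wedge \cdots \wedge \kappa_{i-1}$ and each $\kappa_j$ is a refinement constraint produced in an earlier iteration. The plan is to prove the contrapositive: assuming the property is \emph{not} safe w.r.t. the given unwinding depth, I will exhibit a single assignment that satisfies $\varphi_i \wedge \phi_{err}$ for \emph{every} $i$, thereby contradicting the UNSAT verdict that underlies the ``safe'' conclusion.

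First I would invoke the over-approximation property of the abstraction. If the property is unsafe, there is a concrete execution $\execution$ of the original program (respecting the unwinding depth) that ends in an error state. This execution induces an assignment $\ce'$ to the variables of $\varphi_0 \wedge \phi_{err}$: the thread-local transition relation $\rho$ and the initial states $\phi_{init}$ are satisfied because $\execution$ is a genuine run, the read-selection constraint $\zeta$ is satisfied because the read-from relation realized by $\execution$ is one of the allowed choices encoded by $\zeta$, and $\phi_{err}$ holds because $\execution$ reaches an error state. Since $\varphi_0$ is precisely the monolithic encoding $\monolithic$ with the \tsc conjunct $\xi$ dropped, and $\execution$ in fact satisfies $\xi$ as well, $\ce'$ certainly satisfies the weaker formula $\varphi_0 \wedge \phi_{err}$. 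Moreover, $\ce'$ is a \emph{feasible} counterexample in the sense of Definition~\ref{def:ce-feasible}, since it is constructed directly from the concrete execution $\execution$.

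Next I would show that the refinement constraints never discard this witness. By Theorem~\ref{thm:rccorr}, every refinement constraint $\kappa_j$ generated by the method is satisfied by every feasible counterexample; hence $\ce' \models \kappa_j$ for all $j$. Combining this with $\ce' \models \varphi_0 \wedge \phi_{err}$ gives $\ce' \models \varphi_i \wedge \phi_{err}$ for every iteration $i$, so no such formula can be unsatisfiable. This contradicts the premise that the method reached its ``safe'' verdict via an UNSAT answer, which completes the argument.

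The main obstacle is the over-approximation step: one must argue carefully that \emph{every} concrete error execution is captured as a satisfying assignment of $\varphi_0 \wedge \phi_{err}$. The delicate point is the read-selection encoding $\zeta$ — one needs that the unique write each read observes in $\execution$ corresponds to setting exactly one read-write link literal true, matching the ``one and only one true literal'' structure of $\zeta$. Once this is established, the remaining reasoning is purely monotone: dropping the conjunct $\xi$ only enlarges the solution set, and by Theorem~\ref{thm:rccorr} the added conjuncts $\kappa_j$ preserve all feasible witnesses, so soundness follows by contradiction.
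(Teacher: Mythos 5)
Your overall strategy is the paper's argument read contrapositively: the paper proves the entailment $\monolithic\wedge\phi_{err} \models \varphi_0 \wedge \bigwedge_{i}\kappa_i \wedge \phi_{err}$ and concludes that an UNSAT verdict transfers to the monolithic encoding, whereas you assume an error execution exists and exhibit a single assignment that survives $\varphi_0$ and every $\kappa_j$, contradicting the UNSAT verdict. These are the same proof up to reformulation, and your handling of the over-approximation step (dropping $\xi$, and the ``exactly one true link literal'' structure of $\zeta$) matches what the paper asserts by definition.

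There is, however, one genuine gap: you justify $\ce' \models \kappa_j$ for \emph{all} $j$ solely by Theorem~\ref{thm:rccorr}. That theorem sits in the subsection on correctness of the \emph{graph-based} kernel reason analysis, and its proof goes through Theorem~\ref{thm:krcorr}, which applies only to kernel reasons $\kr \in \krset(\ce)$ produced by equation~(\ref{eq:kr}), i.e., to refinement constraints of the form in equation~(\ref{equ:refinement}). But the method of Fig.~\ref{fig:x} has a second refinement path: when the graph-based validation answers ``not sure'' and the constraint-based validation reports UNSAT, the refinement constraint is $\kappa := \neg\krf$ where $\krf$ conjoins an unsatisfiable-core subset of $\mathbb{G}_\ce \cup \mathbb{S}_\ce$ (equation~(\ref{eq:usc})). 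Your argument never addresses such constraints, and citing Theorem~\ref{thm:rccorr} for them would be unsupported by that theorem's proof. The paper closes this case separately: if a genuine (feasible) assignment made all the core literals true, the actual temporal order of events in the corresponding concrete execution would realize every order requirement encoded by those literals, contradicting that they form an unsatisfiable core of the \eog encoding. You need to add this case so that your claim ``$\ce' \models \kappa_j$ for every $j$'' covers every refinement constraint the method can actually emit; with that one paragraph added, your proof is complete and equivalent to the paper's.
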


\begin{proof}
To prove this theorem, we prove that $\monolithic\wedge\phi_{err} \models \varphi_0 \wedge \wedge_{i=0}^n \kappa_i \wedge \phi_{err}$, where $\monolithic$ is the monolithic encoding, $\varphi_0$ is the initial abstraction, $\kappa_i$ is the $i$-th refinement constraint, and $\phi_{err}$ is the error states.
According to the definition of $\monolithic$ and $\varphi_0$ (cf. Definition~\ref{def:abstraction}), we can obtain $\monolithic\wedge\phi_{err} \models \varphi_0 \wedge \phi_{err}$.
We prove that $\monolithic\wedge\phi_{err}\models\kappa_i$ as follows.

If $\kappa_i$ is obtained from the graph-based refinement process, we prove that for each element $\neg \krf \in \kappa_i$, $\monolithic\wedge\phi_{err}\models\neg \krf$.
Suppose that $\monolithic\wedge\phi_{err}\nvDash\neg \krf$.
Then there must exist an assignment $\pi$ of $\monolithic\wedge\phi_{err}$ s.t. $\krf$ holds.
Given that $\pi$ is a feasible counterexample, according to Theorem~\ref{thm:krcorr}, $\pi \models \neg\krf$, which is contradict with that $\krf$ holds in $\pi$.
Hence, we must have $\monolithic\wedge\phi_{err}\models\neg \krf$ and $\monolithic\wedge\phi_{err}\models \kappa_i$.

If $\kappa_i$ is obtained from the constraint-based refinement process, then $\kappa_i = \neg \krf$ where $\krf$ is an unsatisfiable core of the formula which encodes all order requirements of an \eog.
Suppose that $\monolithic\wedge\phi_{err}\nvDash\neg \krf$.
Then there must exist an assignment $\pi$ of $\monolithic\wedge\phi_{err}$ s.t. $\krf$ holds, which is contradict with that $\krf$ is an unsatisfiable core.
Hence, we must have $\monolithic\wedge\phi_{err}\models\kappa_i$.
\end{proof}

\begin{theorem}
\label{thm:false}
If our method concludes that the property is unsafe, then a true counterexample of the property must exist.
\end{theorem}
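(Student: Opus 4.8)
The plan is to pin down precisely the condition under which the method emits an ``unsafe'' verdict, and then to promote the offending abstraction counterexample into a genuine execution of $P$ that reaches the error. First I would read off from the method (Fig.~\ref{fig:x}) that ``unsafe'' can be declared only when, at some refinement iteration $i$, the solver returns SAT on $\varphi_i \wedge \phi_{err}$, yielding an abstraction counterexample $\ce$, and $\ce$ is subsequently \emph{determined to be feasible}. The graph-based validation of Algorithm~\ref{alg:eval} only ever returns \textsf{Infeasible} or \textsf{Not-Sure}, so a feasibility verdict must originate from the constraint-based \eog validation. That procedure is exact: it encodes all order requirements of Definition~\ref{def:eog} and returns SAT iff $G_\ce$ is feasible. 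Hence an ``unsafe'' verdict forces $G_\ce$ to be feasible.

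Next I would invoke Theorem~\ref{thm:eog_and_ce}: feasibility of $G_\ce$ gives feasibility of $\ce$, and by Definition~\ref{def:ce-feasible} a concrete execution $\execution = (s_{0,\ce}, \stmo)$ of the original program can be constructed from $\ce$, with $\stmo$ a linearization of $\stms_\ce$ consistent with $\cepoprec$ and $\cerfprec$ exactly as in the necessity half of Theorem~\ref{thm:eog_and_ce}. What remains, and what is not delivered by Theorem~\ref{thm:eog_and_ce} itself, is to certify that $\execution$ actually witnesses the error, i.e., that its terminal state satisfies $\phi_{err}$.

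The hard part will be this value-agreement step. I would argue that every variable takes, along $\execution$, the same value it is assigned in $\ce$, proceeding by induction along $\stmo$. Because $\execution$ runs the statements of $\stms_\ce$ in an order consistent with $\cepoprec$ and respects each thread's transition relation $\rho$, the only place the two valuations could diverge is at a ``read''. But for every read-from edge $(e_1, e_2) \in \cerfprec$, feasibility of $G_\ce$ guarantees that no write of $\var(e_1)$ is scheduled strictly between $e_1$ and $e_2$, so in $\execution$ the read $e_2$ observes exactly the value written by $e_1$ --- which is the value $\ce$ assigns to $e_2$. Thus all reads, and hence all subsequent computations, replay the valuation of $\ce$. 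Since $\ce \models \phi_{err}$ and $\phi_{err}$ is a predicate over precisely these (SSA-renamed) program variables, the terminal state of $\execution$ satisfies $\phi_{err}$. Therefore $\execution$ is a true counterexample of the property, which proves the theorem.
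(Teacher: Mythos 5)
Your proposal is correct and follows essentially the same route as the paper's proof: an unsafe verdict can only arise from the constraint-based \eog validation returning \sat, the exactness of that encoding (w.r.t.\ Definitions~\ref{def:eog} and~\ref{def:eog-feasible}) gives feasibility of $G_\ce$, and Theorem~\ref{thm:eog_and_ce} then yields feasibility of the counterexample $\ce$. Your closing value-agreement induction --- showing the constructed execution $\execution$ replays the valuation of $\ce$ at every read via the read-from edges and therefore actually reaches $\phi_{err}$ --- is a step the paper leaves implicit in Definition~\ref{def:ce-feasible} and the necessity half of Theorem~\ref{thm:eog_and_ce}, so it adds rigor to, rather than diverges from, the published argument.
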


\begin{proof}
In our method, the property is concluded unsafe only if the constraint-based \eog validation process returns \sat.
Given that the formula in the constraint-based \eog validation process encodes all order requirements of the \eog exactly, according to Definition~\ref{def:eog} and~\ref{def:eog-feasible}, the \eog is feasible iff the formula is satisfiable.
Hence, the \eog is feasible.
According to Theorem~\ref{thm:eog_and_ce}, the corresponding counterexample must be feasible.
Therefore, a true counterexample of the property must exist.
\end{proof}

\begin{theorem}
\label{thm:terminate}
Our method will terminate for any program with finite state space.
\end{theorem}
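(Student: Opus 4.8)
Our method will terminate for any program with finite state space.

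The plan is to argue termination by showing that the abstraction monotonically shrinks across refinement iterations while the underlying search space is finite, so that only finitely many iterations can occur. First I would observe that, because the program has finite state space and all loops are unwound to a bounded depth, both the set of events $\events$ and the set of read-write link literals are finite. Consequently the set of all possible assignments to the variables in $\varphi_0 \wedge \phi_{err}$ is finite, and there are only finitely many distinct abstraction counterexamples $\ce$ that the solver can ever return. Each iteration of the main loop either terminates the method (the solver returns UNSAT, concluding safe; or the constraint-based validation returns SAT, concluding unsafe) or produces a refinement constraint $\kappa_i$ that is conjoined to the current abstraction.

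The key step is to verify that every refinement genuinely excludes the current counterexample $\ce$, so that $\ce$ can never be produced again. When the graph-based validation (Algorithm~\ref{alg:eval}) or the constraint-based validation determines $\ce$ to be infeasible, the refinement constraint $\kappa_i$ is built from kernel reasons that $\ce$ itself satisfies: in the graph-based case each $\krf_{\kr}$ is a conjunction of guard and read-write link literals all true in $\ce$, so $\ce \models \krf_{\kr}$ and hence $\ce \nvDash \neg\krf_{\kr}$; in the constraint-based case $\krf$ is an unsatisfiable core drawn from the true literals of $\ce$, so again $\ce \models \krf$. In either case $\ce \nvDash \kappa_i$, whence $\varphi_{i+1} := \varphi_i \wedge \kappa_i$ is no longer satisfied by $\ce$. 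Since $\kappa_i$ never removes any feasible counterexample (Theorem~\ref{thm:rccorr}), and each iteration eliminates at least the particular infeasible assignment $\ce$ from the finite solution space, the number of SAT-returning iterations is bounded by the finite number of distinct counterexamples.

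The main obstacle I anticipate is being careful about what exactly is eliminated in each step. Strictly, $\kappa_i$ removes the full set of assignments agreeing with $\ce$ on the relevant literals, not merely a single assignment; but this only strengthens the argument, since the removed set always includes $\ce$ and is never empty. I would therefore frame the termination measure as the number of distinct abstraction counterexamples remaining in the (finite) solution space of $\varphi_i \wedge \phi_{err}$, and argue this strictly decreases at every iteration that does not already halt. Combining this with the two terminating exits (UNSAT or a feasible counterexample) gives that the loop executes finitely often, so the method terminates.
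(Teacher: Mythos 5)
Your proposal is correct and follows essentially the same route as the paper's proof: finiteness of the state space bounds the number of abstraction counterexamples, and since each kernel reason (or unsatisfiable core) is a conjunction of literals drawn from $\mathbb{G}_{\ce} \cup \mathbb{S}_{\ce}$ that are all true in the current counterexample $\ce$, the refinement constraint $\kappa_i$ necessarily excludes $\ce$ from the next abstraction, so at least one counterexample is eliminated per iteration. Your additional care about the measure (the refinement removes a whole set of agreeing assignments, not just one) and the explicit appeal to Theorem~\ref{thm:rccorr} are harmless elaborations of the same argument.
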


\begin{proof}
For a multi-threaded program with finite state space, the number of counterexamples of the initial abstraction must be finite.
Suppose that the counterexample obtained in the $i$-th iteration is $\ce_i$.
According to Section~\ref{sec:refinement}, each kernel reason of the counterexample or the unsatisfiable core obtained in the constraint-based refinement process is just a subset of $\mathbb{G}_{\ce_i} \cup \mathbb{S}_{\ce_i}$.
According to equation (\ref{equ:refinement}) and (\ref{eq:usc}), we have $\ce_i$ must be absent in the next abstraction.
Hence, our method reduces at least one counterexample in each iteration.
It terminates when all counterexamples have been reduced or a true counterexample is found.
\end{proof}

\subsection{Efficiency}

A both efficient and sound way for counterexample validation and refinement generation will be elegant. However, such procedure is usually difficult to devise.
As an alternative, we integrate the graph-analysis and constraint solving approaches together to obtain a both efficient and sound method.

We have proved that enhanced by the constraint-based counterexample validation and refinement generation processes, our method is sound and complete.
We now analyze the effectiveness of the graph-based \eog validation method.
If the \eog is infeasible, the infeasibility may be determined by either the graph-based \eog validation process or the constraint-based \eog validation process.
Although both of these processes can rapidly determine such infeasibility, a much more effective refinement can be obtained if the infeasibility is determined from the graph-based \eog validation process.
Fortunately, cases similar to the ``butterfly'' example rarely occur in practice.
In other words, the graph-based \eog validation process can always identify the infeasibility with rare exceptions.

Suppose that the verification problem is solved via $n$ iterations.
If the property is determined to be unsafe, then all {\eog}s generated during the first $n-1$ rounds must be infeasible,
which can generally be identified by the graph-based \eog validation process.
The constraint-based \eog validation process is only invoked in the last iteration during which the property is violated.
If the property is proved safe w.r.t. the given loop unwinding depth, then the infeasibility of all the $n$ infeasible {\eog}s can generally be identified by the graph-based \eog validation process, and the constraint-based \eog validation process will not be invoked.

In sum, advantages of our method include: 1) Without the \tsc, the initial abstraction $\varphi_0$ is usually much smaller than the monolithic encoding. 2) The graph-based refinement process can usually obtain a small yet effective refinement, which reduces a large amount of space in each iteration whereas the size of all those refinement constraints can usually be ignored compared with that of the abstraction. 3) Though the graph-based validation process is not complete, it is effective to identify the infeasibility in practice.

\section{Experimental Results}
\label{sec:experiment}

We have implemented our method on top of CBMC-4.9 \footnote{Downloaded from \url{https://github.com/diffblue/cbmc/releases} on Nov 20, 2015} and employed \textsc{MiniSat2} as the back-end constraint solver.
Our tool is named \gcbmc, and it is available at \cite{sv-comp2017}.
We use the 1047 multi-threaded programs of \svcompc \cite{sv-comp2017} as our benchmarks.
In the experiments, our tool supports nearly all features of C language and PThreads.

\subsection{Benchmark of \svcompc}
The open-source, representative, and reproducible benchmarks of Competition on Software Verification (\svcomp) have been widely accepted for program verification.
Given that these benchmarks are devised for comparison of those state-of-the-art techniques and tools, a significant number of studies on concurrent program verification have performed their experiments on them.

The concurrency benchmarks of \svcompc include 1047 examples and cover most of the publicly available concurrent C programs that are used for verification.
Though many of these examples are small in size in previous years, dozens of complex examples have been added to these benchmarks in recent years.
For instance, the examples in the \texttt{pthread-complex} directory (collected by the CSeq team) are taken from the papers on PLDI¡¯15 \cite{MachadoLR15}, POPL¡¯15 \cite{BouajjaniEEH15}, and PPOPP¡¯14 \cite{ThomsonDB14}, which are used for concurrent program debugging and testing;
the examples in the \texttt{pthread-driver-races} directory (collected by the SMACK+CORRAL team) are used for symbolic analysis of the drivers from the Linux 4.0 kernel \cite{DeligiannisDR15};
and the examples in the \texttt{pthread-C-DAC} directory (collected by C-DAC) are from the industrial problems of Centre for Development of Advanced Computing, Pune, India.
These programs contain hundreds of lines, 4 to 8 threads, complex structure variables with 2D pointers, and hundreds or even a thousand read/write accesses \footnote{A read/write access of a complex structure variable may contain hundreds of read/write accesses of boolean variables. Here a read/write access of a complex structure variable is considered just one read/write access.}.
Given these complex features, these programs are challenging for existing state-of-the-art concurrency verification techniques and tools.

\subsection{Experimental Setup}
We conduct all of our experiments using a computer with Intel(R) Core(TM) i5-4210M CPU 2.60 GHz and 12 GB memory.
A 900-second time limit is observed.

We select and compare two classes of the state-of-the-art tools with our method.
The first class is those top winners in recent competitions, including \mucseq \footnote{Downloaded from \url{http://sv-comp.sosy-lab.org/2017/systems.php} on January 24, 2017} \cite{TomascoNI0TP16} and \lazycseq \footnotemark[\value{footnote}] \cite{InversoN0TP17}.
\mucseq is the gold medal winner of \svcomp 2016, and \lazycseq is the silver medal winner of \svcomp 2017.
The second class comprises those tools which methods are closely related to ours, including \cbmc \cite{AlglaveKT13} and \threader \footnote{Downloaded from \url{http://sv-comp.sosy-lab.org/2014/participants.php} on January 24, 2017} \cite{PopeeaR13}.
\cbmc is a highly popular verifier for program verification.
Different from our method, it provides an exact encoding of the \tsc for multi-threaded program verification.
For \threader, to the best of our knowledge, it is the best CEGAR-based verifier for multi-threaded C programs.
It has received the gold medal of the concurrency track of SV-COMP 2013.

Given that different tools employ different techniques, each of which has its own features, it is difficult to make the comparison absolutely fair.
For example, \threader performs unbounded verification, while all the other tools perform BMC; and given a loop unwinding depth, both \mucseq and \lazycseq are incomplete whereas all the other tools are complete.
To make the comparison as fair as possible, we select the latest available version of them, and set the parameters of them to be that of the competition.
We believe that these tools should perform best under these parameters.
For the loop unwinding depth, we set that of \cbmc and \gcbmc the same as that of \mucseq.
Specifically, it is dynamically determined through syntax analysis.
The bound is set to 2 for programs with arrays, and $n$ if some of the program's \texttt{for}-loops are upper bounded by a constant $n$ \cite{TomascoNI0TP16}.
Given that our method is implemented on top of \cbmc.
The only difference between \cbmc and \gcbmc is that \cbmc employs the monolithic encoding while we perform our abstraction refinement on the \tsc.

\subsection{Effectiveness and Efficiency}

\gcbmc solved\footnote{\emph{Solve} means that the verifier gives a correct answer (true/false) within the time limit. Refer to \cite{sv-comp2017} for the rules of the competition.} all the 1047 concurrent programs and has received the highest score of 1293 points.
It has won the gold medal in the concurrency track of \svcompc \cite{sv-comp2017} (Warning: It will violate our anonymity).

\paragraph{Overall comparison with state-of-the-art tools.}
Fig.~\ref{fig:9} compares our tool with the state-of-the-art tools, including \mucseq, \lazycseq, and \cbmc.
Similar to \svcompc, we perform our experiments on BenchExec \footnote{\url{https://github.com/sosy-lab/benchexec}} to achieve a reliable and repeatable benchmarking.

The experimental results for \lazycseq and \mucseq are consistent with those in \svcompc.
However, the results in our experiments for \cbmc is better than those in \svcompc.
The reason is that we have improved \cbmc in several aspects, and we have also realized some concurrency-related improvements in \cbmc-5.5 \footnote{Released on August 20, 2016 in \url{https://github.com/diffblue/cbmc/releases}}.

Fig.~\ref{fig:subfig:9a} compares the overall performance of \lazycseq, \mucseq, \cbmc, and \gcbmc based on the SV-COMP rules.
The $x$-axis represents the accumulated score, while the $y$-axis represents the time needed to achieve a certain score~\footnote{The rules to assign the score can be found in \url{http://sv-comp.sosy-lab.org/2017/rules.php}}.
Both our tool and \lazycseq have successfully solved all examples and obtained 1293 points, while \mucseq and \cbmc obtained 1243 and 1258 points, respectively.
\lazycseq, \mucseq, and \cbmc spent 9820, 2540, and 12300 s to finish all examples, respectively, while our tool completed all examples within 1550 s.

\begin{figure}
	\centering
	\subfigure[Time comparison]{
		\label{fig:subfig:9a}
		\includegraphics[width=5cm]{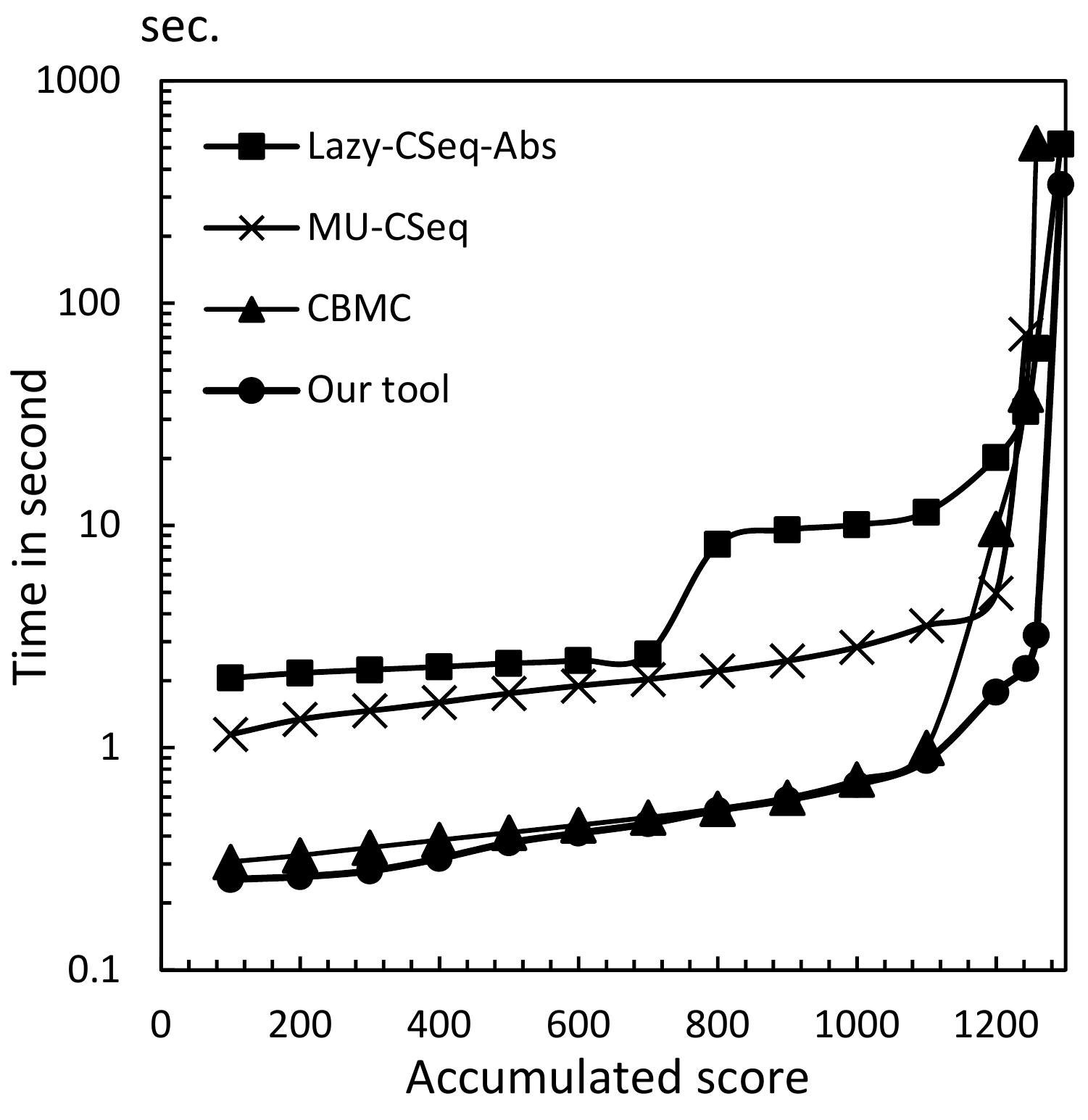}
	}
	\hspace{0.01in}
	\subfigure[Memory comparison]{
		\label{fig:subfig:9b}
		\includegraphics[width=5cm]{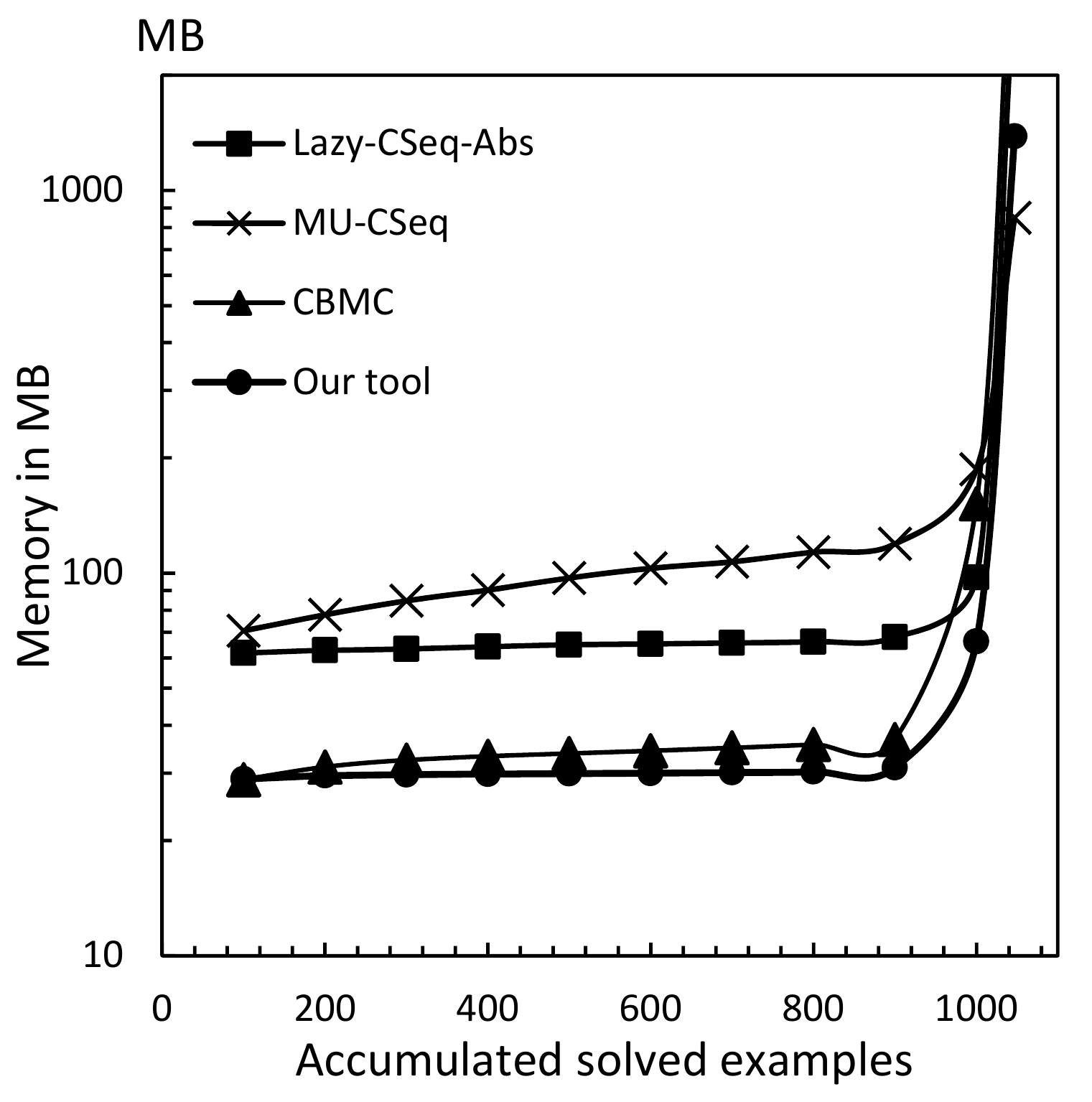}
	}
	\caption{Compare with state-of-the-art tools}
	\label{fig:9}
\end{figure}

Fig.~\ref{fig:subfig:9b} shows the overall memory consumption of the aforementioned tools.
\lazycseq, \mucseq, \cbmc, and \gcbmc require 104, 103, 84 and 43 GB to solve all 1047 examples, respectively.
Given that the scheduling constraint is ignored in the abstraction, our tool always solves small problems and consumes much less memory than the three other tools.

We further compare our tool with \lazycseq, \mucseq, \cbmc, and \threader to evaluate its performance.

\paragraph{\gcbmc versus \lazycseq.}
Compared with \lazycseq, our tool runs 6.34 times faster on average, and consumes only 41\% of the memory over all 1047 examples.
As shown in Fig.~\ref{fig:subfig:10a}, \lazycseq outperforms our tool in only 12 of the 1047 examples.
With its \emph{abstract interpretation} technique, \lazycseq outperforms our tool in those examples where the numerical analysis dominates the complexity.


\paragraph{\gcbmc versus \mucseq.}
\mucseq fails to solve 30 of the 1047 examples.
Compared with this tool, our tool runs 2.43 times faster on average and consumes only 36\% of the memory for the remaining 1017 examples.
As shown in Fig.~\ref{fig:subfig:10b}, \mucseq outperforms our tool in only 33 of the 1047 examples.
By applying the \emph{memory unwinding} technique to limit the number of writes, the encoding size of \mucseq is insensitive to the scale of the data structures, thereby outperforming our tool for some special examples.

\begin{figure}
	\centering
	\subfigure[Our tool versus \lazycseq]{
		\label{fig:subfig:10a} 
		\includegraphics[width=5cm]{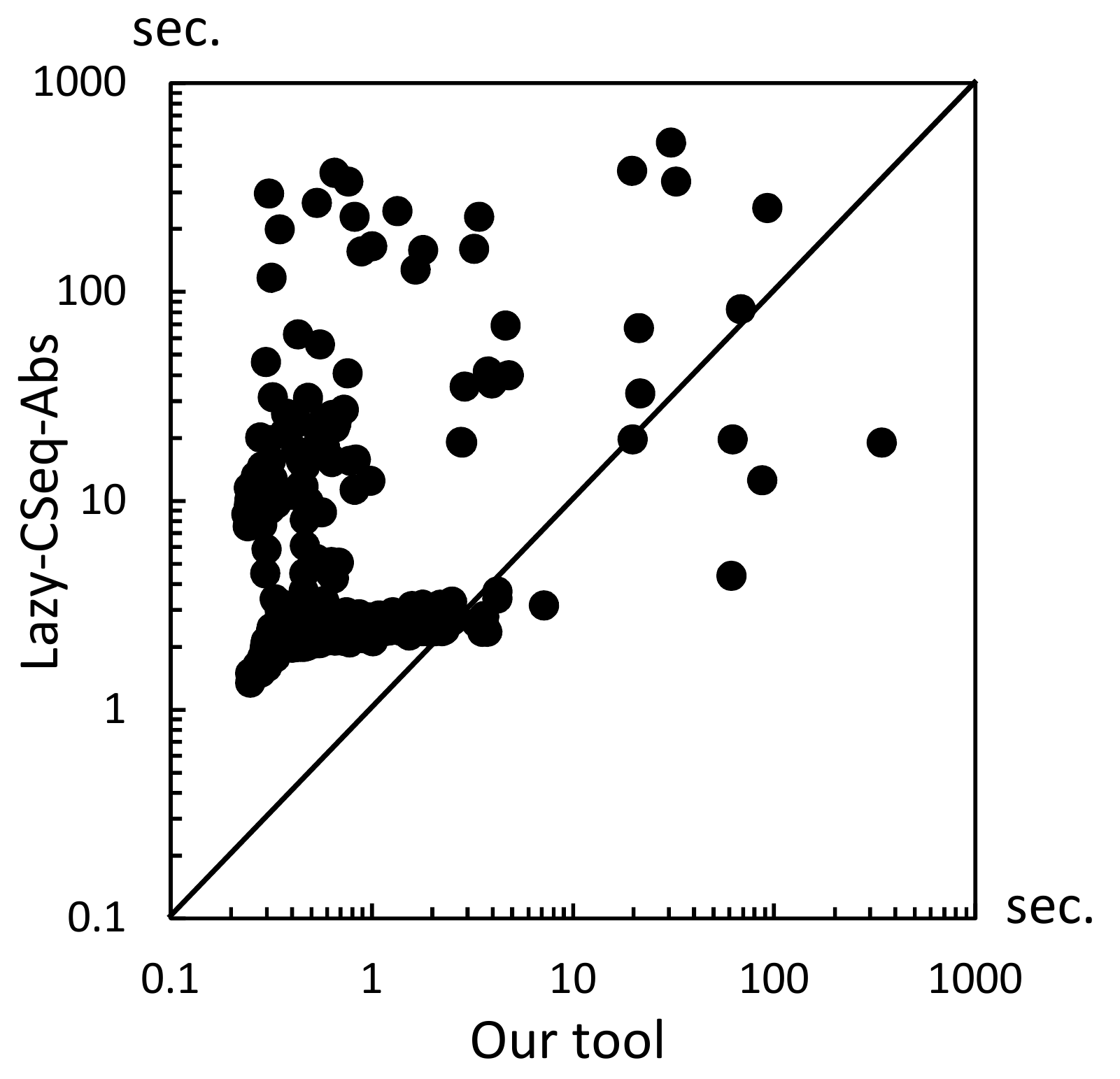}
	}
	\hspace{0.01in}
	\subfigure[Our tool versus \mucseq]{
		\label{fig:subfig:10b} 
		\includegraphics[width=5cm]{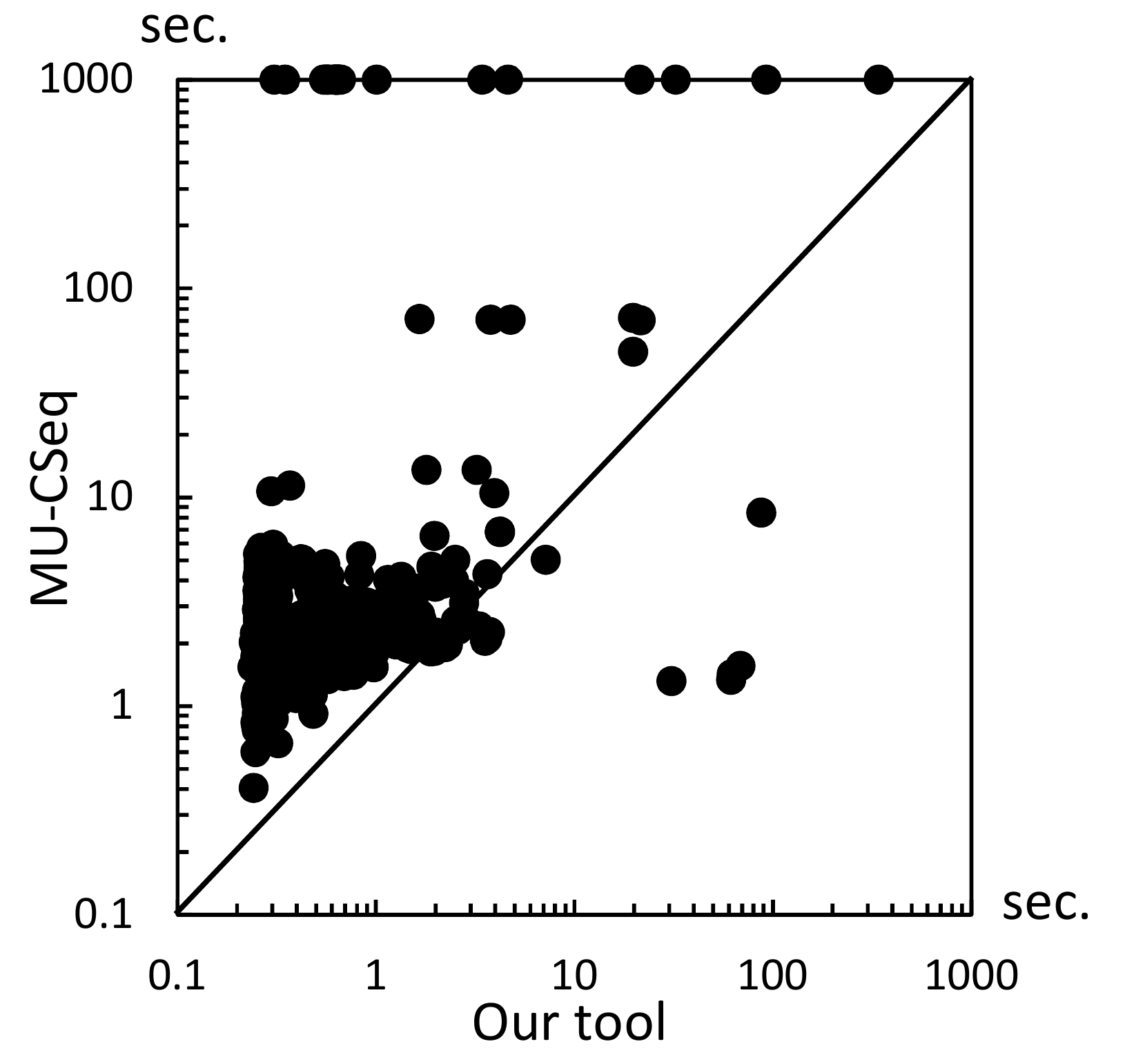}
	}
	\caption{Compare with \lazycseq and \mucseq}
	\label{fig:10} 
\end{figure}

\paragraph{\gcbmc versus \cbmc}
Fig.~\ref{fig:subfig:11a} compares \cbmc with our tool.
\cbmc fails to solve 23 of the 1047 examples.
Both \cbmc and our tool can easily solve 92\% of these examples.
For these trivial examples, the monolithic encoding may sometimes run faster.
However, our tool outperforms \cbmc by 35.8 times on average in the 56 complex cases in which \cbmc needs more than two seconds to solve them.

\paragraph{\gcbmc versus \threader.}
\threader only participated in \svcomp 2013 and 2014.
Given that this tool cannot solve many of the examples in \svcompc, we compare it with our tool on the benchmarks of \svcomp 2014, which contain only 78 examples.
Fig. \ref{fig:subfig:11b} presents the results.
Our tool has completed all 78 examples within the time limit, while \threader completed only 59 examples.
Moreover, our tool and \threader require 140 s and 6865 s to solve these examples respectively, thereby showing that our tool is 49 times faster than \threader on average.
However, as we have declared before, \threader performs unbounded verification, while we perform BMC.

\begin{figure}
	\centering
	\subfigure[Our tool versus \cbmc]{
		\label{fig:subfig:11a} 
		\includegraphics[width=5cm]{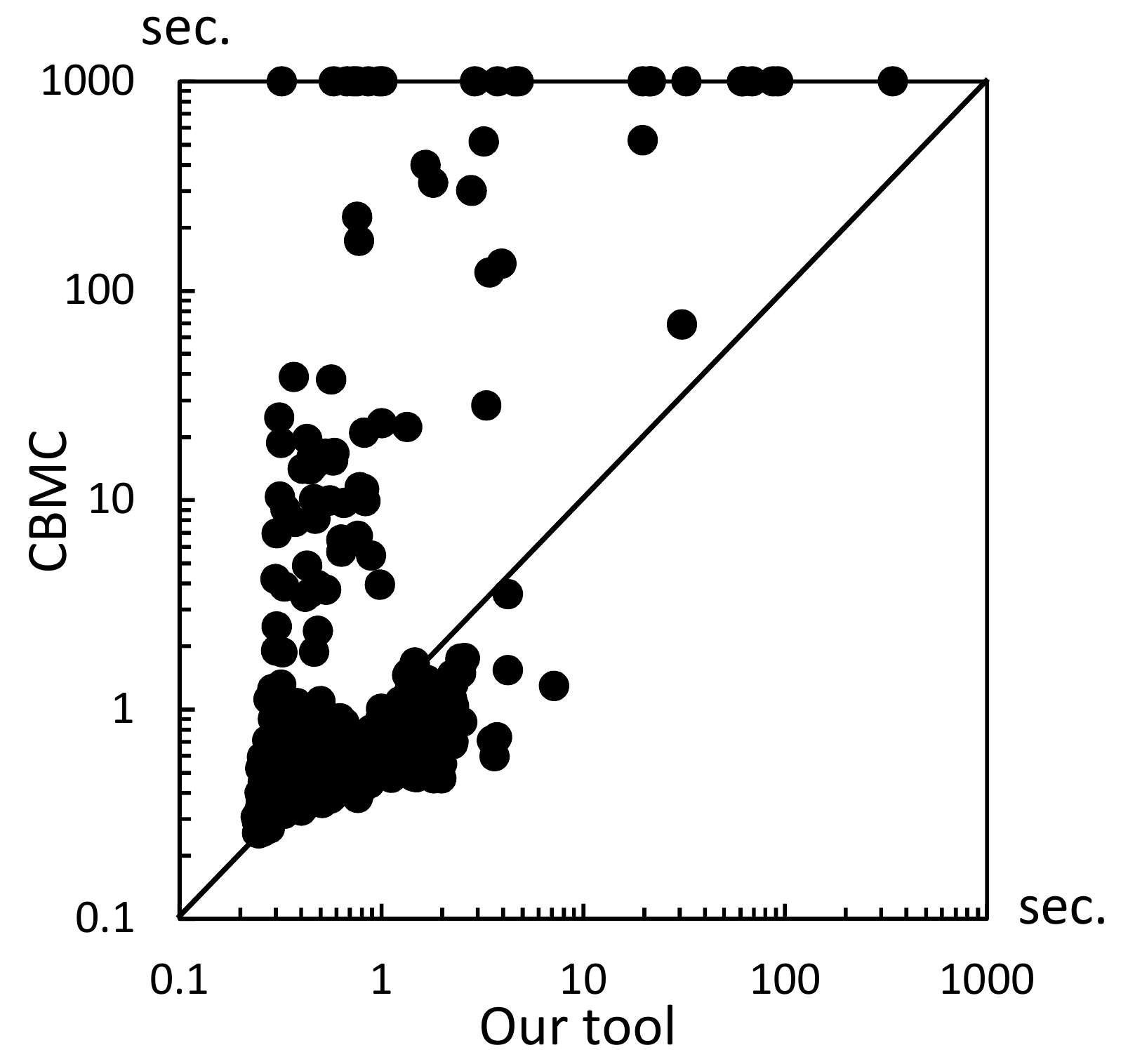}
	}
	\hspace{0.01in}
	\subfigure[Our tool versus \threader]{
		\label{fig:subfig:11b} 
		\includegraphics[width=5cm]{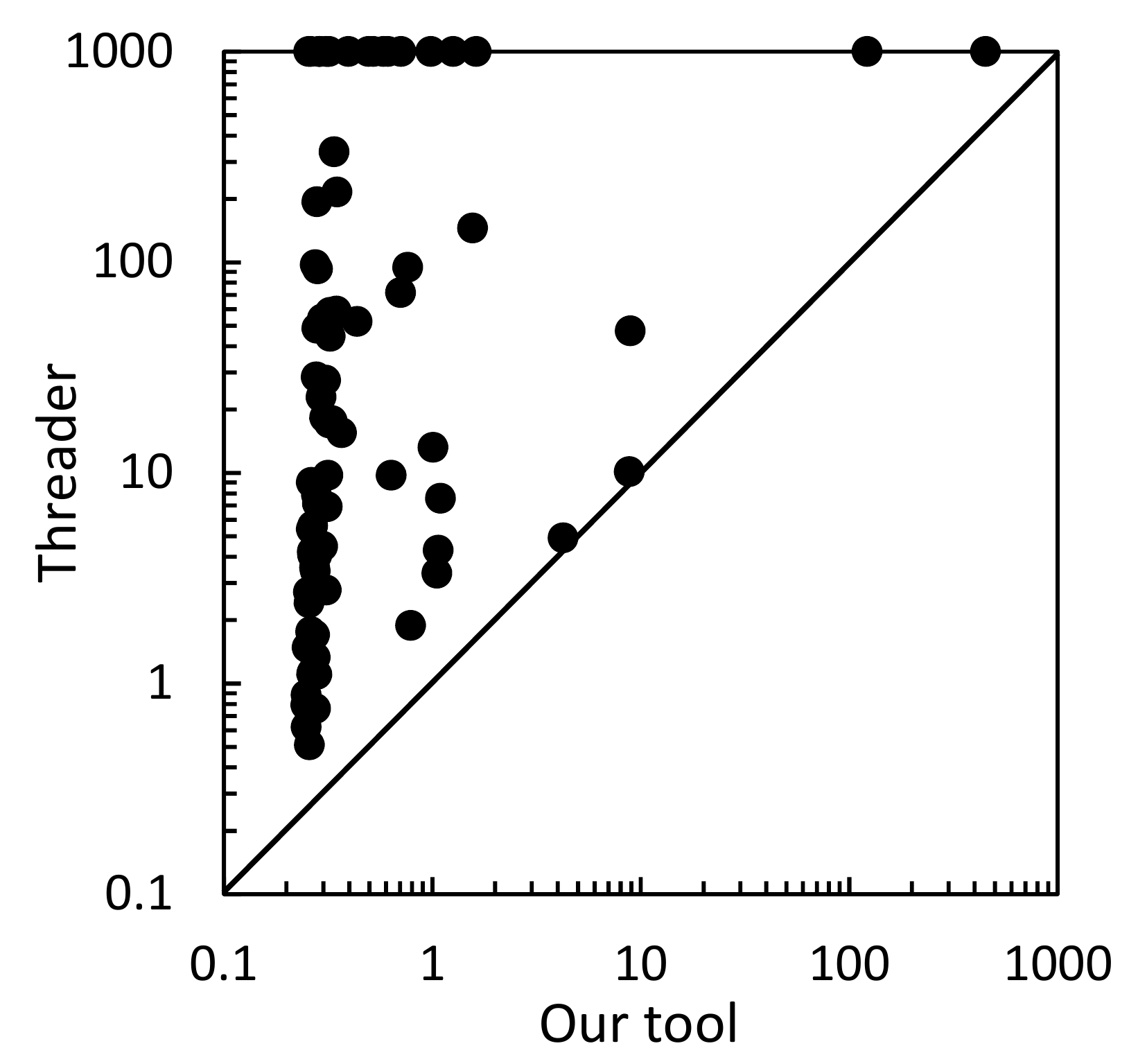}
	}
	\caption{Compare with \cbmc and \threader}
	\label{fig:11} 
\end{figure}

\subsection{Essence Analysis}
\label{costanalysis}

The performance of our tool is mainly affected by the number of refinements, size of constraints and cost of constraint solving, etc. We also justify the benefits from the graph-based refinement generation method.

\paragraph{Number of refinements.}
Fig.~\ref{fig:subfig:14a} presents the number of refinements of our tool in the experiments.
The point (50, 644) indicates that 644 examples can be solved in less than 50 refinements.
Fig.~\ref{fig:subfig:14a} shows that most of the examples can be solved in less than 80 refinements.
In our method, we successfully decomposed the complex verification problem into dozens of small problems.

\paragraph{Size of refinements.}
Our experimental results reveal that without the \tsc, the formula size reduces to 1/8 on average and to 1/1200 in the extreme case, thereby allowing for the abstractions to be solved quickly.
However, the number of clauses increased during the refinement process can usually be ignored.
Most of the examples in our experiments show hundreds or even thousands of increase in the number of CNF clauses during the refinements.
However, the CNF clause number of the abstraction may reach millions.

\paragraph{Cost of constraint solving.}
Without the \tsc, the abstractions can usually be solved instantly.
Meanwhile, the graph-based \eog validation and refinement generation processes are not trivial.
We have observed that in our experiments, our tool has spent most of its time on graph analysis for those examples where the \tsc dominates the encoding.
Meanwhile, for those examples where the complexity mainly stems from the complex data structures and numerical calculation, our tool has spent most of its time on constraint solving.

\begin{figure}
	\centering
	\subfigure[Number of refinements]{
		\label{fig:subfig:14a} 
		\includegraphics[width=5cm]{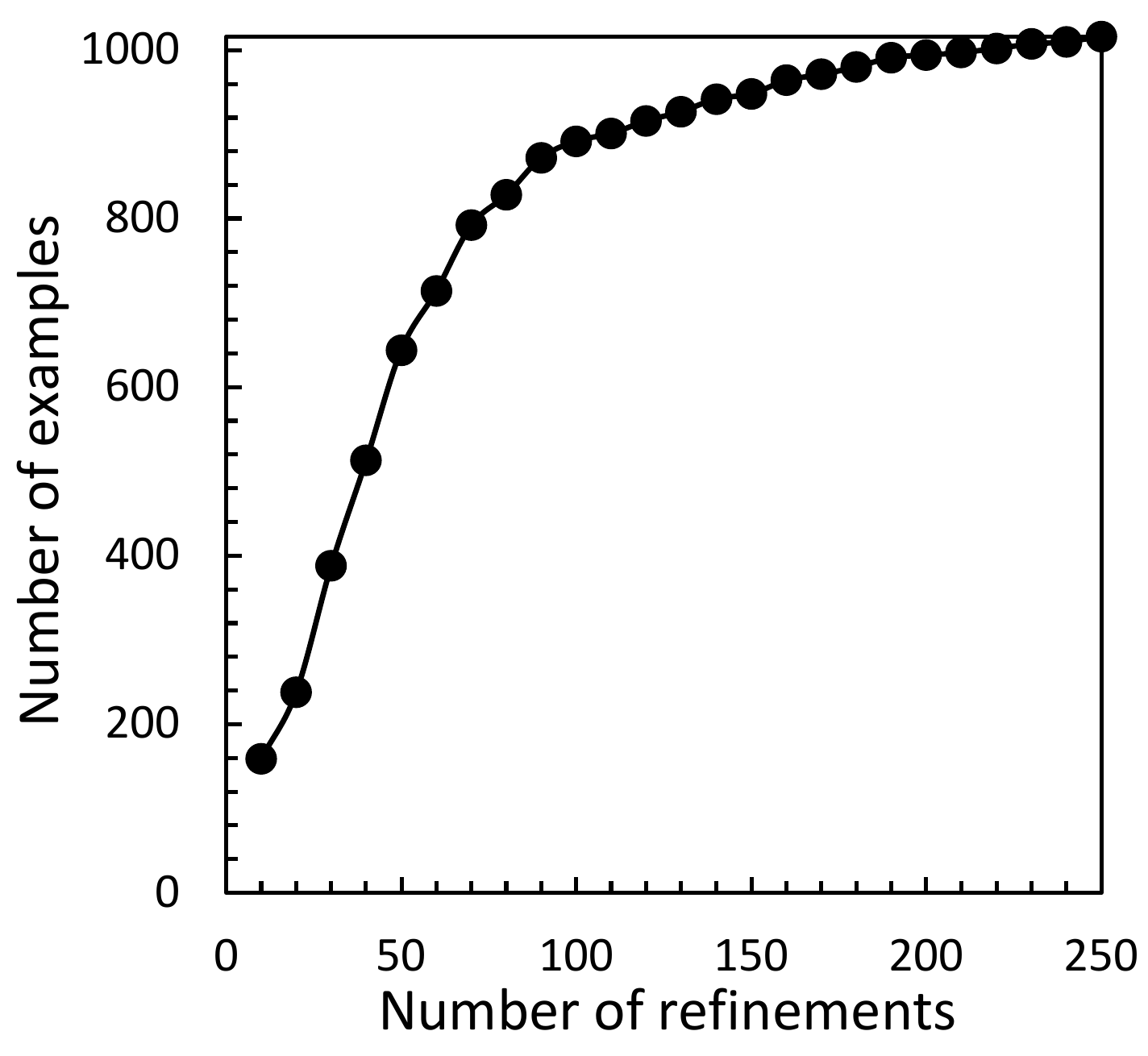}
	}
	\hspace{0.01in}
	\subfigure[Our tool versus \sat-\cegar]{
		\label{fig:subfig:14b} 
		\includegraphics[width=5cm]{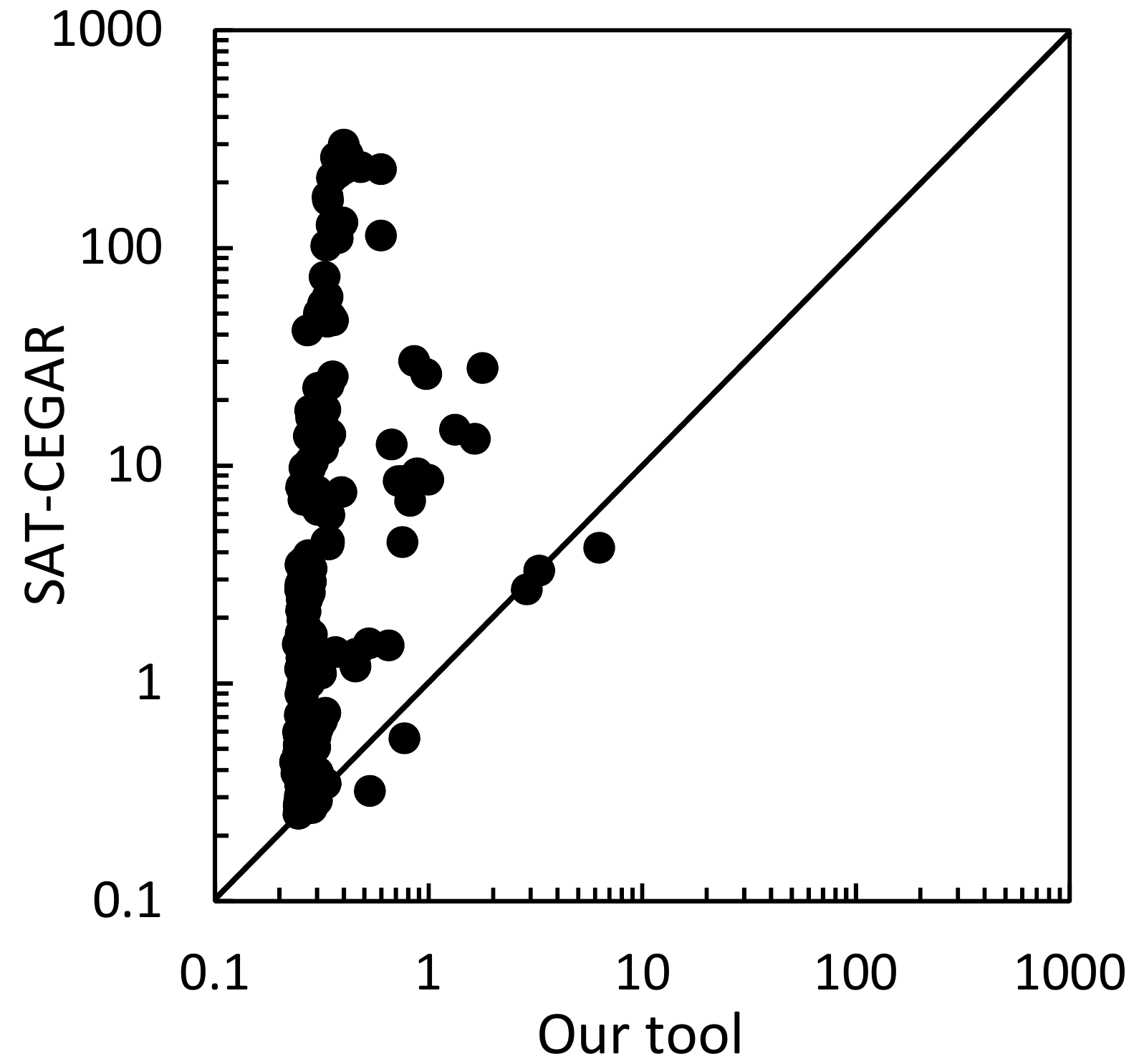}
	}
	\caption{Number of refinements and benefits from \eog}
	\label{fig:14} 
\end{figure}

\paragraph{Efficiency benefit from the graph-based refinement generation.}
The efficiency of our method mainly benefits from the graph-based refinement generation.
We have implemented another \tsc based abstraction refinement method, SAT-CEGAR, which employs only \emph{constraint solving} for \eog validation and refinement generation.
\sat-\cegar has solved only 265 of the 1047 examples in the experiments.
Fig.~\ref{fig:subfig:14b} compares the performance of our tool with \sat-\cegar in solving these examples.
Our tool outperforms \sat-\cegar for most examples, and runs 58 times faster on average.
On average, our tool finds 9.3 core kernel reasons in each refinement, with each kernel reason having an average clause length of 3.06.
Meanwhile, \sat-\cegar only finds one kernel reason in each refinement, with each reason having an average clause length of 7.5.

\paragraph{Exceptions where the graph-based \eog validation method fails}
In our method, if the constraint solving based refinement process is invoked and returns UNSAT, then we cannot achieve an effective refinement.
How often does this case occur in the experiments?
Fortunately, we have not observed cases similar to the ``butterfly'' example in our experiments, and the constraint solving based refinement process has never been invoked.

\subsection{Threats to Validity}

One threat to the validity is the limited benchmarks we have used.
For those examples where the scheduling constraint is not a major part of the encoding, our method may still need dozens of refinements.
Given that those abstractions may have similar size with the monolithic encoding, our method may perform worse than the monolithic method.

Another threat to the validity is the tools we have used.
Given that different technologies have different advantages, it is difficult to give an absolute fair comparison.
In some other scenarios, one may prefer \threader and other tools.

The third threat to the validity is the parameters we have used for each tool.
Most tools have some parameters related with their techniques, such as the loop unwinding depth.
Given that different tools may have different parameters, it is difficult to compare all the tools under the same parameters.

\section{Related Work}
\label{sec:relatedWork}

Addressing the control state explosion resulting from concurrency poses a significant challenge to concurrent program verification.
Several techniques have recently been studied to overcome this problem, including stateless model checking
\cite{Huang15,AbdullaAJS14,BarnatBHHKLRSW13,CoonsMM13}, compositional reasoning
\cite{DinsdaleBGPY13,PopeeaRW14,GuptaPA11,MalkisPR10}, bounded model checking \cite{AlglaveKT13,CordeiroMNF12,InversoT0TP14,TomascoI0TP15,ZhengRLDS15,GuntherLW16}, and abstraction refinement \cite{DanMVY15,MalkisPR10,SinhaW11,DanMVY13,GuptaHRST15}, etc.

The general idea of stateless model checking is to employ partial order reduction (POR) or dynamic partial order reduction (DPOR) \cite{AbdullaAJS14,BarnatBHHKLRSW13,CoonsMM13,ZhangKW15} to explore only non-redundant interleavings.
There are also some work which reduces the search space by restricting the schedules of the program \cite{BerganCG13,WuTHCY12}.
In compositional reasoning, rather than considering all possible interleavings of a program, the property is decomposed into different components.
Each component is then considered in isolation, without any knowledge of the precise concurrent context.
Recent work on compositional reasoning includes assume-guarantee reasoning \cite{ElkaderGPS16,ElkaderGPS15}, rely-guarantee reasoning \cite{GavranNKMV15,LahavV15,GuptaPA11}, thread-modular reasoning \cite{MalkisPR10}, and compositional reasoning \cite{DinsdaleBGPY13,PopeeaRW14}, etc.

Our \tsc based abstraction refinement method explores both bounded model checking and abstraction refinement. Afterward, we compare our study with the recent work on these two methods.

\paragraph{On Bounded Model Checking}
Bounded model checking has been considered an efficient technique to address the interleaving problem.
In \svcompc, 16 out of the 18 participants in the concurrency track have adopted this technique \cite{Beyer17}.
However, pure BMC is still not efficient enough.
Many existing tools combine this method with other techniques.
\textsc{ESBMC} combines symbolic model checking with explicit state space exploration \cite{CordeiroMNF12}.
\textsc{VVT} employs a CTIGAR method, an SMT-based IC3 algorithm that incorporates CEGAR \cite{GuntherLW16}.
The interleaving problem can also be addressed by translating the concurrent programs into sequential programs.
Tools implementing this technique include \mucseq~\cite{TomascoI0TP15}, \lazycseq~\cite{InversoT0TP14}, and \textsc{SMACK}\cite{RakamaricE14}, etc.

However, all of these work gives an exact encoding of the scheduling constraint, while we ignore this constraint and employ a \tsc based abstraction refinement method to obtain a small yet effective abstraction w.r.t. the property.

\paragraph{On Abstraction Refinement}
Abstraction refinement has been widely studied in concurrent program verification.
Most of these work employs predicate abstraction to address the data space explosion problem~\cite{GuptaPA11,DanMVY13,GuptaHRST15,DanMVY15,ZhangMGNY14,GuptaPR11}.
In predicate abstraction, it uses a finite number of predicates to abstract the program.
If an abstraction counterexample is spurious, it finds predicates that add more details of the program to refine the abstraction, s.t. the spurious counterexample is absent in the latter abstraction models.
To find the right set of predicates in less iterations, many heuristics have been proposed.
For example, Ashutosh Gupta and Thomas A. Henzinger et al. accelerated the search for the right predicates by exploring the bad abstraction traces \cite{GuptaHRST15}.
By contrast, we employ abstraction refinement to address the control space explosion problem resulting from the thread interleavings.
Our abstraction and refinement methods are both different with that of predicate abstraction.

The work most related to ours focuses on interference abstraction (IA) \cite{SinhaW11}.
N. Sinha and C. Wang also performed abstraction refinement to deal with the overhead of the exact encoding of the concurrent behavior.
However, they abstracted the behavior by restricting the sets of read events and read-write links, while we consider all read events and read-write links but relax the \tsc.
Accordingly, their abstraction was refined by introducing new read events and read-write links, while we perform the refinement by exploring a graph-based method to analyze core kernel reasons that make an counterexample infeasible.
Moreover, they employ a mixed framework of over- and under-approximations, while our method produces only over-approximation abstractions.
Given that their implementation was for Java program slices, an empirical comparision between their and our method is difficult.

Another work closely related to ours is \cite{Kusano016}.
In this work, M. Kusano and C. Wang also presented a set of deduction rules to help determine the infeasibility of an interference combination. However, our task is to determine the feasibility of a counterexample which contains a large number of read-write links, and our main innovation is to devise a graph-based refinement generation method to obtain an effective refinement constraint.
In addition, our deduction rules are much simpler yet stronger than theirs.

To deal with the interleaving problem, A. Farzan and Z. Kincaid also divided the verification into data and control modules, and incorporated them into an abstraction refinement framework \cite{FarzanK12,FarzanK13}. The difference is that in their work, the verification is reasoned by data-flow analysis, while we represent the program by SSA statements and employ graph and constraint based EOG analysis approaches to do the refinement.
In addition, their work focuses on parameterized programs, while we concentrate on multi-threaded programs based on PThreads.

\section{Conclusions}
\label{sec:conclusion}

This paper proposed a scheduling constraint based abstraction refinement method for multi-threaded program verification.
To obtain an effective refinement, we also devised two graph-based algorithms for counterexample validation and refinement generation. Our experiment results on benchmarks of \svcompc show that our method is promising and significantly outperforms the existing state-of-the-art tools.
We plan to extend this technique to weak memory models, such as TSO, PSO, POWER, in the future.


\end{document}